\documentclass{article}
\usepackage{amsmath, amssymb, bm, amsthm}
\usepackage[nolist,nohyperlinks]{acronym}
\usepackage[round]{natbib}
\usepackage{hyperref}
\usepackage[margin=2cm]{geometry}
\usepackage{tikz}
\usepackage{algorithm, algorithmic}
\usepackage{booktabs}
\usepackage{multirow}
\usepackage{makecell}
\usetikzlibrary{arrows.meta, positioning, calc, fit, shapes.geometric}
\usepackage{caption}
\usepackage{tabularx}
\usepackage{graphicx}   
\usepackage{threeparttable}  
\usepackage{pifont}
\usepackage{authblk}

\tikzset{
  node distance=1.2cm and 1.8cm,
  every node/.style={draw, circle, minimum size=1cm, font=\small},
  every path/.style={>=Stealth,->,thick}
}

\newtheorem{proposition}{Proposition}

\def\E{\mathbb{E}}

\title{Robust Mendelian Randomization Estimation using Weighted Quantile Regression}
\author[1,2]{Julien St-Pierre}
\author[3]{Archer Y. Yang}
\author[1,4]{Mireille E. Schnitzer}
\author[1,2]{Marc-André Legault}

\affil[1]{Faculté de Pharmacie, Université de Montréal, Montréal, QC, Canada}
\affil[2]{Centre de Recherche Azrieli du CHU Sainte-Justine, Montréal, QC, Canada}
\affil[3]{Department of Mathematics and Statistics, McGill University, Montréal, QC, Canada}
\affil[4]{Département de médecine sociale et préventive, Université de Montréal, Montréal, QC, Canada}

\begin{document}

\begin{acronym}
    \acro{2SLS}{two stage least squares}
\end{acronym}

\begin{acronym}
    \acro{MR}{Mendelian randomization}
\end{acronym}

\begin{acronym}
    \acro{KZCS}{\cite{Kang2016}}
\end{acronym}

\begin{acronym}
    \acro{OLS}{ordinary least squares}
\end{acronym}

\begin{acronym}
    \acro{LIML}{limited information maximum likelihood}
\end{acronym}

\begin{acronym}
    \acro{IV}{instrumental variable}
\end{acronym}

\begin{acronym}
    \acro{GWAS}{genome-wide association studies}
\end{acronym}

\begin{acronym}
    \acro{IVW}{inverse-variance weighted}
\end{acronym}

\begin{acronym}
    \acro{MCP}{minimax concave penalty}
\end{acronym}

\begin{acronym}
    \acro{QR}{quantile regression}
\end{acronym}

\begin{acronym}
    \acro{WQR}{weighted quantile regression}
\end{acronym}

\begin{acronym}
    \acro{MLE}{maximum-likelihood estimator}
\end{acronym}

\begin{acronym}
    \acro{ALD}{asymmetric Laplace distribution}
\end{acronym}

\begin{acronym}
    \acro{InSIDE}{Instrument Strength Independent of Direct Effect}
\end{acronym}

\begin{acronym}
    \acro{MAF}{minor allele frequency}
\end{acronym}

\begin{acronym}
    \acro{LAD}{least-absolute deviation}
\end{acronym}

\begin{acronym}
    \acro{CI}{confidence interval}
\end{acronym}

\begin{acronym}
    \acro{RHR}{resting heart rate}
\end{acronym}

\begin{acronym}
    \acro{AF}{atrial fibrillation}
\end{acronym}

\begin{acronym}
    \acro{SD}{standard deviation}
\end{acronym}

\begin{acronym}
    \acro{RMSE}{root mean square error}
\end{acronym}

\begin{acronym}
    \acro{OR}{odds ratio}
\end{acronym}

\begin{acronym}
    \acro{ZEMPA}{zero modal pleiotropy assumption}
\end{acronym}

\maketitle

\begin{abstract}
In Mendelian randomization (MR) studies, genetic variants are used as instrumental variables (IVs) to investigate causal relationships between exposures and outcomes based on observational data. However, numerous
genetic studies have shown the pervasive pleiotropy of genetic variants, meaning that many, if not most, variants are associated with multiple traits, potentially violating the core assumptions of IV estimation. Uncorrelated pleiotropy occurs when genetic variants have a direct effect on the outcome that is not mediated by the exposure, while correlated pleiotropy occurs when genetic variants affect the exposure and outcome via shared heritable confounders. In this work, we propose a novel MR method, called MR-Quantile, based on weighted quantile regression (WQR) that is robust to both correlated and uncorrelated pleiotropy. We propose a procedure for selecting the optimal quantile of the ratio estimates through a likelihood-based formulation of WQR using the asymmetric Laplace distribution. Monte Carlo simulations demonstrate the empirical performance of the proposed method, especially in settings with many invalid IVs with weak pleiotropic effects. Finally, we apply our method to study the causal effect of resting heart rate on atrial fibrillation. Genetic variants associated with heart rate were identified in a genome-wide association study of 425,748 individuals from the VA Million Veteran Program, and used as instruments in a two-sample MR analysis with summary statistics from a genetic meta-analysis of 228,926 AF cases across eight studies.
\end{abstract}

\section{Introduction}

In \ac{MR} studies, genetic variants, typically single nucleotide polymorphisms (SNPs), are used as \acp{IV} to investigate causal relationships between exposures and outcomes based on observational data. Notably, when individual-level data are unavailable, MR allows one to perform causal inference using only summary-level associations of genetic variants with the exposure and the outcome, which can easily be obtained from publicly available \ac{GWAS}. The validity of \ac{MR} studies strongly depends on three core assumptions required for a valid \ac{IV}, under the assumed data-generating model in Figure \ref{fig:dag}a, which are:~(A1) The \ac{IV} is associated with the exposure $X$;~(A2) The \ac{IV} is independent of the outcome $Y$ conditional on $X$ and the unmeasured confounder $U$; and (A3) The \ac{IV} is independent of $U$. When all instruments are valid and independent, the \ac{IVW} estimator~\citep{Burgess2013} provides a consistent estimate of the causal effect by combining variant-specific estimates, similar to a meta-analysis. However, numerous genetic studies have shown the pervasive pleiotropy of genetic variants, meaning that many, if not most, variants are associated with multiple traits~\citep{Mackay2024, Qi2024, Watanabe2019}.~Uncorrelated pleiotropy occurs when genetic variants have a direct effect on the outcome that is not mediated by the exposure, violating assumption A2, while correlated pleiotropy occurs when genetic variants affect the exposure and outcome via shared heritable confounders, violating assumption A3.

~\citet{Bowden2016} proposed a median estimator which is robust to violation of A2 and A3, and is consistent for the causal effect as long as at least half the SNPs are valid instruments, because this guarantees that the median estimate comes from a valid instrument~\citep{Windmeijer2018}. In practice, this assumption, which can be referred to as the majority valid assumption, is difficult to verify. \ac{MR} methods that depend on a weaker assumption, the plurality-valid assumption, have been proposed. This assumption states that, in large samples, the estimates from all valid \acp{IV} converge to the true causal effect, whereas those from invalid \acp{IV} converge to different values. Consequently, the valid \acp{IV} constitute the largest group of SNPs sharing a common ratio estimate. \citet{Hartwig2017} proposed using the mode of the smoothed empirical density function of all ratios as the causal effect estimate. MR-Lasso~\citep{Slob2020} and cML-MA-BIC~\citet{Xue2021} estimate the subset of invalid instruments with pleiotropic effects using either a lasso penalty selected via a heterogeneity criterion, or a constrained maximum likelihood (cML) approach combined with model averaging (MA) using Bayesian information criterion (BIC) weights. Alternatively, MR-Mix~\citep{Qi2019}, MR-ContMix~\citep{Burgess2020}, and MR-CAUSE~\citep{Morrison2020} model the heterogeneous distribution of valid and invalid SNPs using mixture models. While conceptually appealing, mixture model approaches can be difficult to fit when the number of SNPs is small, are often computationally demanding, and typically require estimating a large number of parameters. In contrast, methods that rely on selecting or removing invalid instruments may lack robustness in the presence of a high proportion of weak invalid IVs, where small but nonzero pleiotropic effects are difficult to distinguish from noise.

In this work, we propose a novel \ac{MR} method based on \ac{WQR} that is robust to both violations of A2 and A3 and depends only on the plurality-valid assumption. Our method, called MR-Quantile, is a generalization of the weighted median estimator to cases where possibly more than 50\% of \acp{IV} are invalid. We propose a data-driven procedure for selecting the optimal quantile of the ratio estimates through a likelihood-based formulation of \ac{WQR} using the \ac{ALD}. Due to its heavy tails, the \ac{ALD} is robust to outlying ratio estimates arising from invalid instruments with pleiotropic effects. Moreover, the \ac{ALD} naturally links likelihood-based inference to \ac{WQR}. If the sample distribution of ratio estimates contains a dominant mode corresponding to valid instruments, then the ALD provides a working likelihood that targets the $\tau$th quantile as the mode of this distribution. In Section 2, we review the weighted median estimator and its link with \ac{WQR}, before introducing our model based on the \ac{ALD}. In Section 3, we present results from Monte Carlo simulations comparing finite sample performance of different MR estimators under various scenarios. In Section 4, we implement the proposed estimator to study the causal effect of \ac{RHR} on \ac{AF} using summary statistics from 425,748 European ancestry individuals from the VA Million Veteran Program~\citep{Verma2024} and from a European meta-analysis which included 228,926 \ac{AF} cases from eight studies~\citep{Yuan2025}.

\begin{figure}[tbp]
\centering

\begin{tikzpicture}[
    node distance=2.2cm,
    thick,
    >=Stealth,
    every node/.style={font=\large}
]

\node[font=\bfseries] (labelA) at (-3.5,1.8) {a};
\node (IV) at (-4,0) {IV};
\node (X)  [right=of IV] {$X$};
\node (Y)  [right=of X] {$Y$};
\node (U)  at ($(X)!0.5!(Y)+(0,1.3)$) {$U$}; 

\draw[->] (IV) -- node[above] {A1} (X);
\draw[->] (X) -- (Y);
\draw[->] (U) -- (X);
\draw[->] (U) -- (Y);

\draw[<->, dashed] (IV) to[bend left=30] node[pos=0.5, above] {A3} (U);
\draw[->, dashed] (IV) to[bend right=30] node[pos=0.5, below] {A2} (Y);

\node at (-1.4,-0.95) {$\times$}; 
\node at (-2.2,1.35) {$\times$}; 

\begin{scope}[xshift=4cm]

\node[font=\bfseries] (labelB) at (1.2,1.8) {b};

\node (g) at (1.0,0) {$Z_i$};
\node (X2) [right=of g] {$X$};
\node (Y2) [right=of X2] {$Y$};
\node (U2) at ($(X2)!0.5!(Y2)+(0,1.3)$) {$U$}; 

\draw[->] (X2) -- node[above] {$\theta_0$} (Y2);
\draw[->] (U2) -- node[pos=0.2, above, right = 0.2] {$\beta_{YU}$} (Y2);
\draw[->] (U2) -- node[pos=0.2, above, left = 0.2]{$\beta_{XU}$} (X2);
\draw[->] (g) to[bend left=30] node[above] {$\phi_i$} (U2);
\draw[->] (g) to[bend right=30] node[below] {$\alpha_i$} (Y2);
\draw[->] (g) -- node[above] {$\gamma_i$}(X2);

\end{scope}

\end{tikzpicture}

\caption{\textbf{Causal model with exposure X and outcome Y.} \textbf{(a)} Three assumptions required for valid IVs. \textbf{(b)} Unified IV framework for valid and invalid IVs~\citep{Xue2021}.}
\label{fig:dag}
\end{figure}
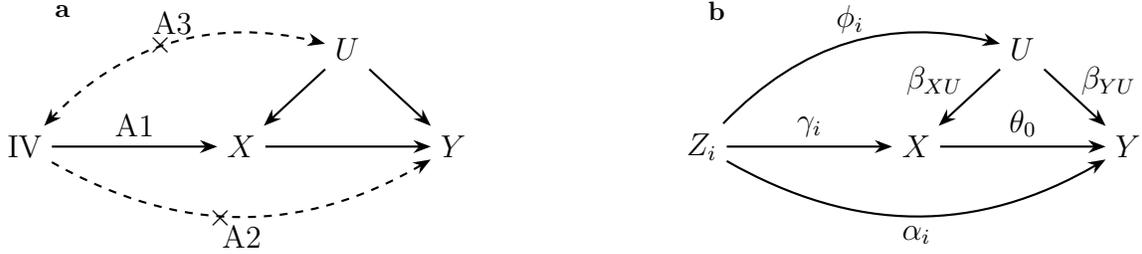

\section{Methods}
\subsection{Background}
A unified framework with both valid and invalid \acp{IV} as illustrated in Figure \ref{fig:dag}b has been proposed by \citet{Xue2021}, where $i=1,...,p$ indexes the number of independent genetic variants. Consider the structural equation model with a continuous outcome $Y$, a continuous exposure $X$, an unmeasured confounder $U$, and $p$ uncorrelated instruments $Z_1, ..., Z_p$ such that
\begin{subequations}\label{eq:1}
\begin{align}
Y_j &= \theta_0X_j + \sum_{i=1}^p \alpha_i Z_{ij} + \beta_{YU}U_j + \epsilon_{Y_j} , \label{eq:1a} \\
X_j &= \sum_{i=1}^p \gamma_i Z_{ij} + \beta_{XU}U_j + \epsilon_{X_j} \label{eq:1b}, \\
U_j &= \sum_{i=1}^p \phi_i Z_{ij} + \epsilon_{U_j} \label{eq:1c},
\end{align}
\end{subequations}
for samples $j=1,...,n$, and where $\epsilon_{Y_j}, \epsilon_{X_j}, \epsilon_{U_j}$ are mutually independent error terms that are uncorrelated with $Z_1,...,Z_p$, and $\phi_i$ and $\alpha_i$ denote the correlated (indirect) and uncorrelated (direct) pleiotropic effects, respectively. Further, we assume that ${\E[\epsilon_{Y_j}\mid Z_1,...,Z_p] = \E[\epsilon_{X_j}\mid Z_1,...,Z_p] = \E[\epsilon_{U_j}\mid Z_1,...,Z_p] = 0}$. From \eqref{eq:1a}, $\theta_0$ corresponds to the average change in the outcome for a unit change in the exposure, which we define as the average causal effect (ACE) of $X$ on $Y$. Assuming all variables are mean-centered without loss of generality, let $\beta_{Yi}=\E[Z_{ij}^2]^{-1}\E[Z_{ij}Y_j]$ and $\beta_{Xi}=\E[Z_{ij}^2]^{-1}\E[Z_{ij}X_j]$, such that the Wald ratio estimate $r_i$ for the $i^{th}$ SNP, $i=1,...,p$, is equal to
\begin{equation}
    r_i = \frac{\beta_{Yi}}{\beta_{Xi}} = \theta_0 + \frac{\alpha_i + \beta_{YU}\phi_i}{\beta_{Xi}}. \label{eq:ratios}
\end{equation} 
If $Z_i$ is a valid \ac{IV}, assumptions A2 and A3 imply $\alpha_i = 0$ and $\phi_i = 0$, so that $r_i = \theta_0$ and the ACE, $\theta_0$, can be identified from the population-level regression coefficients of $X$ and $Y$ on $Z_i$. In two-sample MR studies, the population parameters $\beta_{Xi}$ and $\beta_{Yi}$ are replaced by estimates $\hat\beta_{Xi}$ and $\hat\beta_{Yi}$ obtained from two independent \ac{GWAS} summary statistics for $X$ and $Y$ respectively. Assuming all instruments are valid, the \ac{IVW} estimator~\citep{Burgess2013} is consistent for $\theta_0$ and can be obtained by fitting the linear model
\begin{align*}
\hat\beta_{Yi} = \theta\hat\beta_{Xi} + \hat\sigma_{Y_i}\epsilon_i, \qquad \epsilon_i \overset{i.i.d.}{\sim} N(0, 1), \qquad i=1,...,p,
\end{align*}
where $\hat\sigma_{Yi}$ is the standard error of $\hat\beta_{Yi}$. Moreover, the IVW estimator can be expressed as a weighted mean of the measured ratios $\hat r_i$, that is,
\begin{equation}\label{eq:ivw}
    \hat\theta_{IVW} = \sum_{i=1}^p w_i \hat r_i,
\end{equation}
where $\hat r_i = \hat\beta_{Yi}/\hat\beta_{Xi}$ and $w_i = \hat\sigma_{Yi}^{-2}\hat\beta_{Xi}^2 / \sum_{k=1}^p\hat\sigma_{Yk}^{-2}\hat\beta_{Xk}^2$. Consequently, the contribution from ratio estimates with higher variance or weaker instrument strength is downweighted.

Binary disease outcomes are routinely investigated in \ac{MR} studies, and genetic associations between disease outcomes and SNPs are often reported using logistic regression models, such that $\hat\beta_{Yi} = \log \widehat{\text{OR}}(Y\mid Z_i)$. However, due to the non-collapsibility of the logistic regression model, the marginal causal log odds ratio is generally not identifiable under a logistic-linear structural equation model~\citep{Burgess2015, Didelez2007}. For a rare disease outcome $Y$, if we instead assume that the log probability of the outcome is linear in the exposure with no effect modification, then the structural outcome model can be written as
\begin{equation*}
\log \E[Y_j \mid X_j, Z_{1j}, ..., Z_{pj}, U_j] = \theta_0X_j + \sum_{i=1}^p\alpha_iZ_{ij} + \beta_{YU}U_j.
\end{equation*}
Under this model, $\theta_0$ corresponds to the causal log relative risk (RR) for a one unit increase of $X$ on $Y$~\citep{Didelez2010}. Let $\beta_{Yi}$ be the log-coefficient parameter from a loglinear regression of $Y$ on $Z_i$, denoted by $\log \text{RR}(Y\mid Z_i)$. The Wald ratio in \eqref{eq:ratios} is now equal to
\begin{equation}\label{eq:ratios2}
r_i = \frac{\log \text{RR}(Y \mid Z_i)}{\beta_{Xi}} = \theta_0 + \frac{\alpha_i + \beta_{YU}\phi_i}{\beta_{Xi}},
\end{equation}
where $\beta_{Xi}=\E[Z_{ij}^2]^{-1}\E[Z_{ij}X_j]$ again denotes the coefficient parameter from a linear regression of $X$ on $Z_i$. Thus, when the disease outcome prevalence is low in the population under study, we can approximate $\log \text{RR}(Y | Z_i)$ in \eqref{eq:ratios2} by $\log{\text{OR}}(Y | Z_i)$, such that the causal log relative risk $\theta_0$ is approximately identified using summary statistics $\hat\beta_{Yi}$ and $\hat\beta_{Xi}$ from valid IVs, respectively estimated via logistic and linear regression models from two independent GWAS. Finally, assuming all instruments are valid, a consistent estimate for the causal log RR $\theta_0$ can be obtained by the IVW estimator in \eqref{eq:ivw}.

\subsection{Weighted quantile regression}
Let $\hat r_i = \hat\beta_{Yi} / \hat\beta_{Xi}$ be the ratio of the SNP-outcome and SNP-exposure summary statistics for the $i^{th}$ SNP, and $\sigma_{Yi},\sigma_{Xi}$ the standard errors for $ \hat\beta_{Yi}$ and $ \hat\beta_{Xi}$ respectively. Under the assumption that more than 50\% of the SNPs come from valid instruments, the median of the ratio estimates is consistent for the causal effect $\theta_0$. In practice, when combining ratio estimates with varying precisions, estimates with higher variability are downweighted. A common approach is to use inverse variance weights given by $w_i = 1 / \textrm{Var}(\hat{r}_i) $. From the delta method for the variance of the ratio of two random variables~\citep{Burgess2015}, we have
\begin{align*}
\textrm{Var}(\hat{r}_i) \approx
\frac{\sigma^2_{Yi}}{\hat{\beta}_{Xi}^{\,2}}
\;+\;
\frac{\hat{\beta}_{Yi}^{\,2}\,\sigma^2_{Xi}}{\hat{\beta}_{Xi}^{\,4}}
\;-\;
2\,\frac{\hat{\beta}_{Yi}}{\hat{\beta}_{Xi}^{\,3}}\,
\operatorname{Cov}\!\big(\hat{\beta}_{Yi},\hat{\beta}_{Xi}\big),
\end{align*}
with the last term being equal to zero when the SNP-outcome and SNP-exposure summary statistics are obtained from two independent \ac{GWAS}. Hence, the weighted median estimator from \citet{Bowden2016} can be obtained as the minimizer of the weighted \ac{LAD} regression model
\begin{align*}
\underset{\theta}{\text{min }} \sum_{i=1}^p w_i|\hat r_i - \theta|,
\end{align*}
with inverse-variance weights $w_i$. The weighted \ac{LAD} regression can be generalized to obtain any weighted $\tau$\textsuperscript{th} sample quantile, $0 < \tau < 1$, as the solution to the \ac{WQR} minimization problem~\citep{Koenker2005}
\begin{align*}
\underset{\theta}{\text{min }} \tau\sum_{i:\hat r_i \ge \theta}^p w_i|\hat r_i - \theta| + (1 - \tau)\sum_{i:\hat r_i < \theta}^p w_i|\hat r_i - \theta|.
\end{align*}
More generally, let $\hat Q(\tau)$ be the weighted $\tau$-quantile of the \ac{MR} ratios $\hat r_1, ...,\hat r_p$, with strictly positive weights $w_1, ..., w_p$, then
\begin{align*}
\hat Q(\tau) = \underset{\theta}{\text{arg min }} \sum_{i=1}^pw_i\rho_{\tau}(\hat r_i - \theta),
\end{align*}
where  
\[
\rho_\tau(u) =
\begin{cases}
\tau u, & \text{if } u \ge 0, \\[6pt]
(\tau - 1)u, & \text{if } u < 0
\end{cases}
\]
is the check loss function. We show in Appendix A.1 that $\hat Q(\tau)$ is a consistent estimator for $\theta_0$ if $\tau$ satisfies
\begin{align*}
 q_{-} \le \ &\tau \le 1 -q_{+},
\end{align*}
where $q_{-}$ and $q_{+}$ are respectively the weighted proportion of invalid \acp{IV} with negative and positive pleiotropic effects. This assumption is known in the MR literature as the \ac{ZEMPA}, i.e., the weighted mode of the pleiotropic effects across all instruments is 0~\citep{Hartwig2017}. In practice, identifying the proportion of invalid instruments with pleiotropic effects is challenging, and the choice of the $\tau$-th quantile must rely on strong and unverifiable assumptions about the distribution of the invalid \acp{IV}. To address this, we propose an approach based on the connection between the \ac{ALD} and \ac{WQR}, which allows for data-driven estimation of $\tau$ and $\theta$ that remains agnostic to the true proportion of invalid variants in the observed sample, as long as \ac{ZEMPA} holds.

\subsection{Model setup}
We assume that the ratio $\hat r_i$ follows an \ac{ALD}~\citep{Yu2005} with density 
\begin{align}\label{eq:laplace}
f(\hat r_i; \theta, \tau, \lambda_i) = \lambda_i\tau(1-\tau)\textrm{exp}(-\lambda_i\rho_\tau(\hat r_i - \theta))
\end{align}
for $i=1,...,p$ and where $0<\tau<1$ is a skewness parameter, and $\lambda_i > 0$ is an inverse scale parameter. The \ac{ALD} is skewed to the left when $\tau > 0.5$, and skewed to the right when $\tau < 0.5$ as shown in Figure \ref{fig:ald-density}. When $\tau = 0.5$, the distribution is known as the Laplace double exponential distribution. Moreover, if $\hat r_i \sim ALD(\theta, \tau, \lambda_i)$, then $\Pr(\hat r_i \le \theta) = \tau$ and $\Pr(\hat r_i > \theta) = 1 - \tau$ such that $\theta$ is the $\tau^{th}$ quantile of the distribution. Finally, the expectation of $\hat r_i$ is given by $\E[\hat r_i] = \theta + \frac{1-2\tau}{\tau(1-\tau)\lambda_i}$, such that when $\tau=0.5$, the mode, median and mean are identical. From \eqref{eq:laplace}, the log-likelihood function for the model parameters $(\theta, \tau, \lambda_i)$ is given by
\begin{align}\label{eq:loglik1}
\ell(\theta,\tau,\lambda_i; \hat r_i) &= \sum_{i=1}^p \log \lambda_i + \sum_{i=1}^p \log \tau(1-\tau) - \sum_{i=1}^p \lambda_i\rho_\tau(\hat r_i - \theta). 
\end{align}
To account for the varying precision of the ratio estimates, we further assume $\lambda_i=w_i\lambda$ with $w_i = 1 / \sqrt{\textrm{Var}(\hat r_i)}$ and $\lambda>0$ a common inverse scale parameter, such that \eqref{eq:loglik1} becomes
\begin{align}\label{eq:loglik2}
\ell(\theta,\tau,\lambda) &= \sum_{i=1}^p \log w_i + \sum_{i=1}^p \log \lambda\tau(1-\tau) - \lambda\sum_{i=1}^p w_i\rho_\tau(\hat r_i - \theta). 
\end{align}
The \acp{MLE} for $\theta$, $\tau$, and $\lambda$ are obtained by maximizing \eqref{eq:loglik2} with respect to each parameter while holding the other two fixed, iteratively until convergence, from where
\begin{align}
\hat\theta &= \underset{\theta}{\arg\min}\sum_{i=1}^p w_i\rho_{\hat\tau}(\hat r_i - \theta), \label{eq:theta} \\
\hat\lambda &= \frac{p}{\sum_{i=1}^p w_i\rho_{\hat\tau}(\hat r_i - \hat\theta)}, \label{eq:lambda} \\
\hat\tau &= 0.5 - \frac{\hat a}{2(2p + \sqrt{\hat a^2 + 4p^2})}, \label{eq:tau}
\end{align}
with $\hat a = \hat\lambda\sum_{i=1}^p w_i(\hat r_i - \hat \theta)$. The derivation of the \ac{MLE} for $\tau$ is detailed in Appendix A.2, and the iterative procedure is presented in Algorithm 1.~Hence, $\hat{\theta}$ is the $\hat\tau^{th}$ weighted quantile of the sample $\hat r_1, ..., \hat r_p$ with respective weights given by the inverse of the standard errors of the ratio estimates. Of note, when $\tau=0.5$, the \ac{MLE} in \eqref{eq:theta} is slightly different from the weighted median estimator of~\citet{Bowden2016} which uses the inverse of the variance of the ratio estimates as weights. 

To obtain a standard error and \ac{CI} for $\hat\theta$, we use the same approach as~\citet{Bowden2016} who proposed a parametric bootstrap approach.~For $b=1, ..., B$, we generate a bootstrap sample of the SNP-outcome and SNP-exposure summary statistics
\begin{align*}
&\hat\beta_{Yi}^{(b)} \sim N(\hat\beta_{Yi}, \sigma^2_{Y_i}), \\
&\hat\beta_{Xi}^{(b)} \sim N(\hat\beta_{Xi}, \sigma^2_{X_i}),
\end{align*}
for each variant $i=1,...,p$. To properly account for the uncertainty in the estimation of $\tau$ and $\lambda$ when calculating the standard error for $\hat\theta$, we re-estimate the \acp{MLE} for $\theta, \lambda, \tau$ for each bootstrap sample, and take the standard deviation of the bootstrap estimates $\hat\theta^{(1)}, ..., \hat\theta^{(B)}$ as the standard error of $\hat\theta$.

\begin{figure}[htbp]
    \centering
    \includegraphics[width=0.8\textwidth]{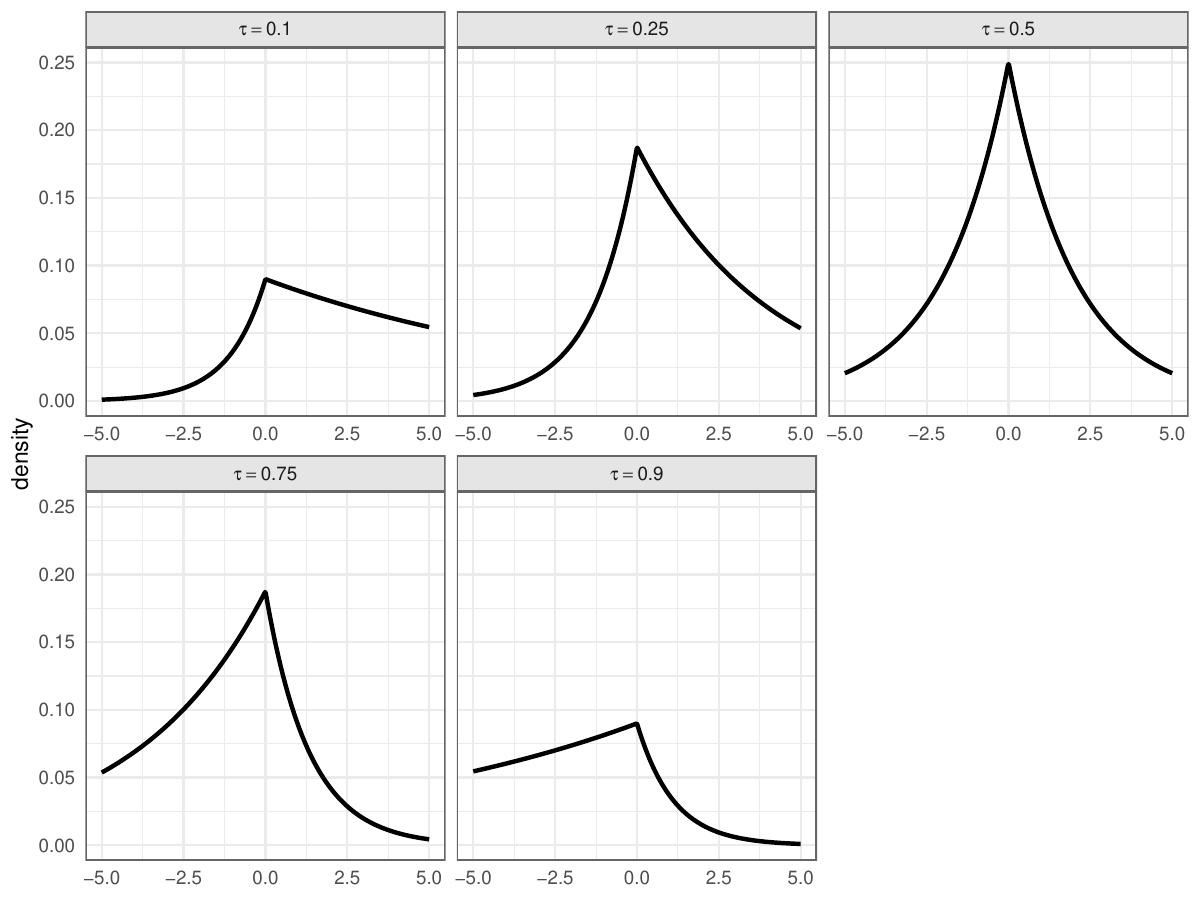}
    \caption{\textbf{Asymmetric Laplace density (ALD) with $\theta = 0$ and $\lambda = 1$ for various quantile levels $\tau$.} The ALD is skewed to the left when $\tau > 0.5$, and skewed to the right when $\tau < 0.5$; when $\tau = 0.5$, the ALD is the same as the Laplace double exponential distribution.}
    \label{fig:ald-density}
\end{figure}

\begin{algorithm}
\caption{Iterative Maximum-Likelihood Estimation of $(\theta, \lambda, \tau)$ for the asymmetric Laplace distribution.}
\label{alg:iterative}
\begin{algorithmic}[1]

\STATE \textbf{Input:} Estimated ratios $\hat r_1, ..., \hat r_p$ with their standard errors $se(\hat r_1),...,se(\hat r_p)$, tolerance $\epsilon > 0$.
\STATE \textbf{Initialize:} $\tau^{(0)}=0.5$, $w_i = se(\hat r_i)^{-1}$, set $k \gets 0$.

\REPEAT

    \STATE Update $\theta$ given updated $\tau^{(k)}$:
    \[
        \theta^{(k)} \gets \underset{\theta}{\arg\min}\sum_{i=1}^p w_i\rho_{\tau^{(k)}}(\hat r_i - \theta)
    \]

    \STATE Update $\lambda$ given $\tau^{(k)}$ and $\theta^{(k)}$:
    \[
        \lambda^{(k)} \gets \frac{p}{\sum_{i=1}^p w_i\rho_{\tau^{(k)}}(\hat r_i - \theta^{(k)})}.
    \]
    
    \STATE Update $\tau$ given current $\lambda^{(k)}$ and $\theta^{(k)}$:
    \[
        \tau^{(k+1)} \gets 0.5 - \frac{a^{(k)}}{2(2p + \sqrt{(a^{(k)})^2 + 4p^2})},
    \]
     where $a^{(k)} = \lambda^{(k)}\sum_{i=1}^p w_i(\hat r_i - \theta^{(k)})$.

    \STATE Update log-likelihood:
    \[
        \ell(\theta^{(k+1)},\tau^{(k+1)},\lambda^{(k+1)}) = \sum_{i=1}^p \log w_i + \sum_{i=1}^p \log \left( \lambda^{(k+1)}\tau^{(k+1)}(1-\tau^{(k+1)}) \right) - \lambda^{(k+1)}\sum_{i=1}^p w_i\rho_{\tau^{(k+1)}}(\hat r_i - \theta^{(k+1)}).
    \]

    \STATE $k \gets k + 1$

\UNTIL{$|\ell(\theta^{(k+1)},\tau^{(k+1)},\lambda^{(k+1)}) - \ell(\theta^{(k)},\tau^{(k)},\lambda^{(k)})| < \epsilon$}

\STATE \textbf{Output:} $\theta^{(k+1)}, \lambda^{(k+1)}, \tau^{(k+1)}$

\end{algorithmic}
\end{algorithm}

\section{Simulation study}
\subsection{Simulation study design}
\subsubsection{Simulations with strong pleiotropic effects}
We compared the performance of our proposed method with existing \ac{MR} estimators through simulations following~\citet{Xue2021} and~\citet{Burgess2020} in which simulated pleiotropic effects are relatively strong. We considered the three following scenarios:
\begin{enumerate}
    \item All genetic variants are valid instruments (no pleiotropy).
    \item A fraction $q \in \{20\%,40\%,60\%\}$ of genetic variants have positive uncorrelated pleiotropic effects $\alpha_i$ (Eq.~\eqref{eq:modelc}), generated from a $\text{Uniform}(0.2,0.3)$ distribution for $i=1,\dots,m$ (uncorrelated pleiotropy).
    \item A fraction $q \in \{20\%,40\%,60\%\}$ of genetic variants have both positive uncorrelated pleiotropic effects $\alpha_i$ (Eq.~\eqref{eq:modelc}), generated from $\text{Uniform}(0.2,0.3)$, and correlated pleiotropic effects $\phi_i$ (Eq.~\eqref{eq:modela}), generated from $\text{Uniform}(-0.1,0.1)$ for $i=1,\dots,m$ (correlated pleiotropy).
\end{enumerate}

In the third simulation scenario, the \ac{InSIDE} assumption which states that pleiotropic effects are not correlated with the SNP-exposure direct effects is violated due to the presence of correlated pleiotropy~\citep{Burgess2017}. We simulated individual-level data for two independent samples of size $n=50,000$, using the first sample to obtain marginal associations between each variant and the outcome $Y$, and the second samples to obtain marginal associations with the exposure $X$. Individual observations for the risk factor $X$, outcome $Y$ and unmeasured confounder $U$ were generated from the following model:
\begin{subequations}\label{eq:model}
\begin{align}
U &= \sum_{i=1}^p \phi_i \cdot G_i + \epsilon_U, \label{eq:modela} \\
X &= \sum_{i=1}^p \gamma_i \cdot G_i + \beta_{XU}\cdot U + \epsilon_X, \label{eq:modelb} \\
Y &= \theta_0 \cdot X + \sum_{i=1}^p \alpha_i \cdot G_i + \beta_{YU}\cdot U + \epsilon_Y . \label{eq:modelc}
\end{align}
\end{subequations}
For each genetic variant $i=1,\dots,p$, with $p=30$ or $100$, we first simulated its \ac{MAF} $f_i$ from a $\text{Uniform}(0.1,0.3)$ distribution. We then generated a vector of SNP genotypes $G_i \in \mathbb{R}^n$, where each entry was simulated from a Binomial$(2,f_i)$. Invalid variants were randomly selected at each replication, with the number of invalid variants given by $m = q \cdot p$. The SNP–exposure direct effects $\gamma_i$ (Eq.~\eqref{eq:modelb}) were simulated from either $\text{Uniform}(0.1,0.2)$ or $\text{Uniform}(-0.2,-0.1)$ with equal probability. We set $\beta_{XU} = \beta_{YU} = 1$ and assumed standard normal errors $\epsilon_U$, $\epsilon_X$, and $\epsilon_Y$ (Eqs.~\eqref{eq:modela}--\eqref{eq:modelc}). The causal effect $\theta_0$ (Eq.~\eqref{eq:modelc}) was set to $0$ (null model), $-0.1$, or $0.1$. A total of $10{,}000$ replications were performed for the null models to evaluate type I error, and $2{,}000$ replications were performed otherwise.

\subsubsection{Simulations with weak invalid IVs}
In the previous simulation design, pleiotropic effects are relatively large compared with the direct SNP–exposure effects, representing an idealized setting for methods that rely on the identification of invalid instruments. We therefore considered additional simulations as in~\citep{Xue2021} in which pleiotropic effects are weaker, making the identification of invalid instruments substantially more challenging. Specifically, we simulated $p = 50$ genetic variants and a sample size of $n = 50{,}000$. The first 30 variants were selected as invalid instruments, corresponding to a proportion $q = 60\%$. The pleiotropic effects of the invalid \acp{IV} were generated in two ways. Uncorrelated pleiotropic effects $\alpha_i$ were drawn from a normal distribution,  
\[
\alpha_i \sim N\left(0, \frac{h_y^2}{p}\right), \quad h_y^2 \in \{0.1, 0.2, 0.4, 0.6\},
\]  
while correlated pleiotropic effects $\phi_i$ were drawn from  
\[
\phi_i \sim N\Big(0, \frac{h_u^2}{p}\Big), \quad h_u^2 \in \{0, 0.1\},
\] for $i=1,...,30.$ Thus, setting $h_u^2 = 0$ ensured that all correlated pleiotropic effects were null. The SNP-exposure direct effects $\gamma_i$, for $i = 1, \dots, p$, were generated as  
\[
\gamma_i \sim N\Big(0, \frac{h_x^2}{p}\Big), \quad h_x^2 = 0.5.
\]  
Instead of simulating individual-level data as in the previous simulation design, SNP-exposure and SNP-outcome summary statistics were generated directly from normal distributions  
\[
\hat\beta_{Xi} \sim N(\gamma_i + \phi_i, \hat\sigma_{Xi}^2), \quad 
\hat\beta_{Yi} \sim N\big(\theta \cdot (\gamma_i + \phi_i) + \alpha_i + \phi_i, \hat\sigma_{Yi}^2\big),
\]  
with $\hat\sigma_{Xi}^2 = \hat\sigma_{Yi}^2 = 1/n$. We considered multiple values for the true causal effect $\theta \in \{-0.2, -0.1, -0.05, 0, 0.05, 0.1, 0.2\}$.

\subsubsection{MR methods}
In all simulations, we compared the performance of our proposed MR-Quantile with the following methods: MR-IVW~\citep{Burgess2013}, MR-Egger~\citep{Burgess2017}, MR-PRESSO~\citep{Verbanck2018}, MR-RAPS~\citep{Zhao2020}, MR-Weighted-Median~\citep{Bowden2016}, MR-Weighted-Mode~\citep{Hartwig2017}, MR-Lasso~\citep{Slob2020}, cML-MA-BIC and cML-MA-BIC-DP~\citep{Xue2021}, MR-Mix~\citep{Qi2019} and MR-ContMix~\citep{Burgess2020}. We used the default implementation provided by the \texttt{TwoSampleMR} package (version 0.6.29) in R for MR-IVW, MR-Egger, MR-Weighted-Median, MR-Weighted-Mode, and MR-RAPS. For MR-LASSO and MR-ContMix, we used the default implementation provided by the \texttt{MendelianRandomization} package (version 0.10.0) in R. The reference R implementations provided by the authors were used for the other estimators. As an optimal unbiased estimator, we also included the MR-IVW method on the subset of valid IVs, denoted as MR-IVW (Oracle). Due to its very high running time relative to the other methods, we did not include MR-CAUSE~\citep{Morrison2020} in this simulation study. Table \ref{tab:mr_methods} provides an overview of the methods and the assumptions required for each method to yield consistent estimates of the causal effect.

\begin{table}[htbp]
\centering
\caption{Overview of different MR methods compared in simulations, including which assumptions each method relies on.}
\label{tab:mr_methods}
\resizebox{\textwidth}{!}{%
\begin{tabular}{lll}
\toprule
\textbf{Method} & \textbf{Model / Approach} & \textbf{Assumptions} \\
\midrule
MR-IVW 
  & Weighted linear regression with inverse variance weights
  & No invalid IV \\
MR-Egger 
  & Weighted linear regression with unconstrained intercept to detect/correct directional pleiotropy 
  & InSIDE \\
MR-RAPS 
  & Robust adjusted profile score accommodating weak instruments and systematic pleiotropy 
  & InSIDE \\
MR-PRESSO 
  & Detects and removes outlier IVs via a global pleiotropy test and outlier correction 
  & InSIDE, Majority valid \\
MR-Weighted-Median 
  & Weighted median of ratio estimates
  & Majority valid \\
MR-Weighted-Mode 
  & Weighted mode of the smoothed empirical density function of ratio estimates
  & Plurality valid \\
MR-Lasso 
  & Weighted linear regression with lasso penalty 
  & Plurality valid \\
cML-MA-BIC 
  & Constrained maximum likelihood with model averaging and BIC selection over invalid IV sets 
  & Plurality valid \\
cML-MA-BIC-DP 
  & Extension of cML-MA-BIC with data perturbation for more consistent selection of invalid IVs
  & Plurality valid \\
MR-Mix 
  & Normal-mixture model identifying valid and invalid IVs through distinct mixture components
  & Plurality valid \\
MR-ContMix 
  & Contamination mixture model where invalid IVs are modeled as diffuse contamination
  & Plurality valid \\
MR-Quantile 
  & Weighted quantile regression based on the asymmetric Laplace distribution
  & Plurality valid \\
\bottomrule
\end{tabular}%
}
\begin{tablenotes}
\scriptsize
\item \textit{Note:} Balanced pleiotropy: the method assumes zero-mean pleiotropic effects. InSIDE: the method requires the Instrument Strength Independent of Direct Effect assumption. Majority valid: the method is consistent when $\geq 50\%$ of IV weights are valid. Plurality valid: the method is consistent when the largest group of IVs sharing the same ratio estimate corresponds to valid IVs.
\end{tablenotes}
\end{table}

\subsection{Results}
\subsubsection{Simulations with strong pleiotropic effects}
Empirical type I error rates for scenario 1 (no pleiotropy) and scenario 2 (uncorrelated pleiotropy) are presented in Supplementary Figure \ref{fig:typeIerror_InSIDE}. When all instruments were valid, all methods controlled the type I error rate adequately except for MR-ContMix. As the fraction of uncorrelated pleiotropic variants increased, the MR-PRESSO method rejected the null well above the nominal level of 0.05, especially when the number of instruments was low. The MR-Weighted-Mode and MR-Mix methods were generally the most conservative, but the former had inflated type I error when 60\% of instruments were invalid and $p=30$, presumably because it is challenging to estimate the mode of a distribution with a small number of valid IVs. In summary, the cML-MA-BIC, cML-MA-BIC-DP, MR-Egger, MR-IVW, MR-Lasso, MR-Weighted-Median, MR-Mix and MR-Quantile methods controlled the type I error rate adequately for all simulation settings when the \ac{InSIDE} assumption held.

Supplementary Figure \ref{fig:typeIerror_noInSIDE} presents the empirical type I error rates for the third simulation scenario with correlated pleiotropy, where the \ac{InSIDE} assumption was violated. As expected, MR-IVW, MR-Egger, MR-PRESSO and MR-RAPS all had inflated type I error rates. Moreover, when the fraction of invalid IVs was equal to 60\%, the rejection rate of MR-Weighted-Median was approximately twice the nominal level, as opposed to MR-Quantile which does not rely on the majority valid assumption. Once again, the MR-Weighted-Mode and MR-Mix methods were the most conservative methods when the fraction of invalid IVs was below 60\%, and MR-Weighted-Mode had the highest type I error inflation when 60\% of instruments were invalid. In summary, the cML-MA-BIC, cML-MA-BIC-DP, MR-Lasso, MR-Mix and MR-Quantile methods controlled the type I error rate adequately for all simulation settings when the \ac{InSIDE} assumption was violated.

Empirical distributions of the MR estimates of the causal effect for scenario 2 (right panels) and scenario 3 (left panels) with 60\% invalid IVs are presented in Figure \ref{fig:boxplots_main} when the true causal effect was either equal to $\theta_0=0.1$ (top panels) or $\theta_0=-0.1$ (bottom panels). In all simulation settings, the variance of MR-Egger, MR-IVW, MR-PRESSO and MR-RAPS was very large compared to other methods. Even when the \ac{InSIDE} assumption held, MR-Weighted-Mode and MR-Weighted-Median estimates were slightly biased towards 0. For MR-Weighted-Median, this is explained by the fact that in finite samples, the weighted median is influenced by the distribution of pleiotropic effects. Indeed, when the distribution of invalid IVs is not balanced around the true causal effect, then the weighted median estimator is shifted away from the weighted median of the valid IVs only. For MR-Weighted-Mode, the observed bias towards 0 is explained by the fact that in small samples, invalid genetic instruments identifying causal effect parameters that are close to the true causal effect are likely to contaminate the estimate of the mode. When the \ac{InSIDE} assumption was violated, MR-Egger, MR-IVW, MR-RAPS and MR-Weighted-Mode were substantially biased and had high variance. Moreover, MR-Weighted-Median and MR-Quantile were slightly negatively biased, due to the residuals being correlated with the SNP-exposure direct effects, although the \ac{RMSE} for MR-Quantile was always lower than for MR-Egger, MR-IVW, MR-Lasso, MR-Weighted-Median, MR-Weighted-Mode and MR-PRESSO. As expected, because the simulated pleiotropic effects were relatively large, methods that rely on consistent selection and removal of invalid IVs, such as cML-MA-BIC, cML-MA-BIC-DP and MR-LASSO, or methods that rely on mixture models, such as MR-ContMix and MR-Mix, had the highest power to detect a causal effect across all simulation settings. 

\begin{figure}[tbp]
    \centering
    \includegraphics[width=\textwidth]{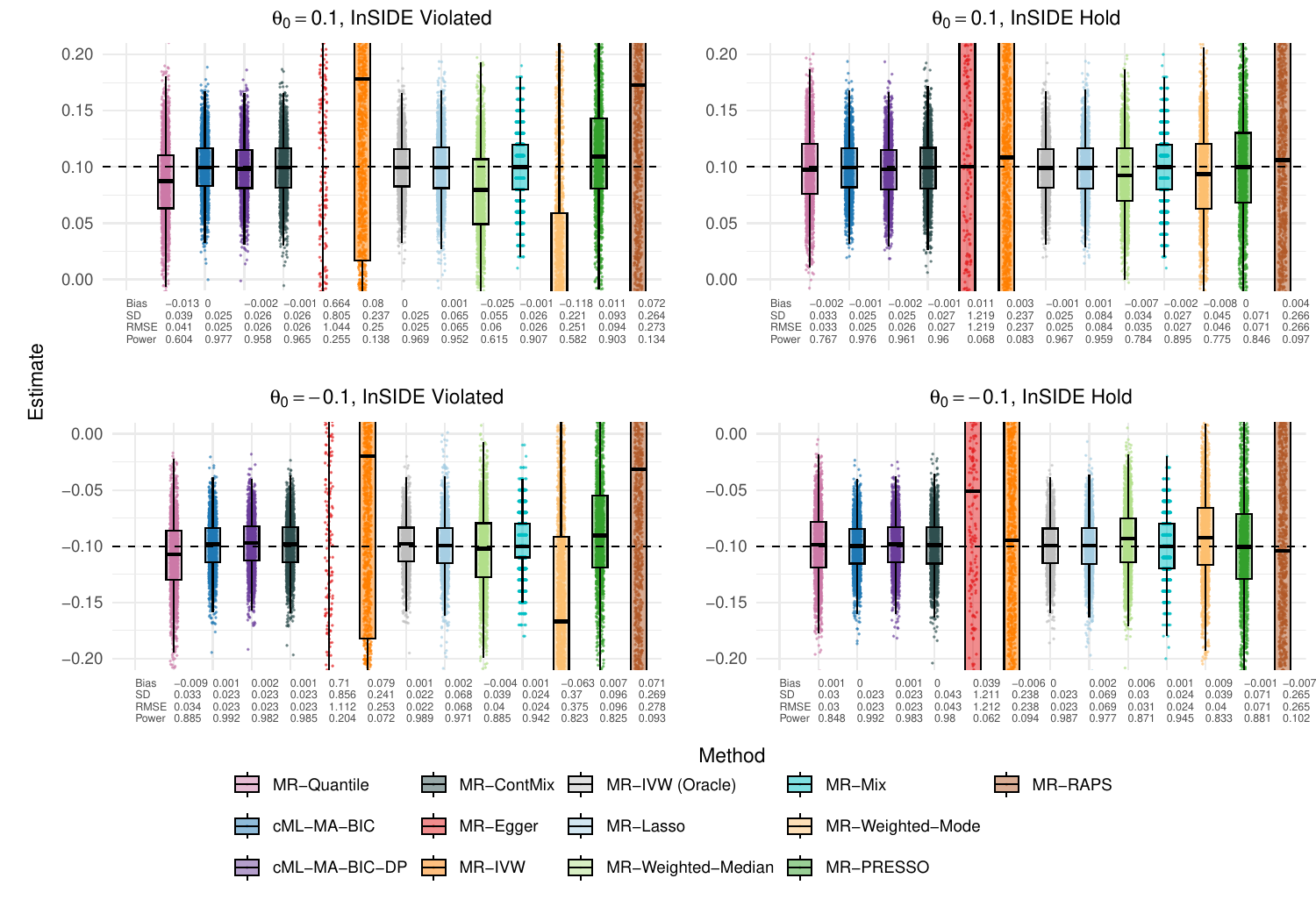}
    \caption{\textbf{Simulations with strong pleiotropic effects.} Empirical distributions of the estimates of the true causal effect $\theta_0 = 0.1$ and $\theta_0 = -0.1$ with sample size $n=50,000$, number of SNPs $p=30$, and fraction of invalid IVs $q=60\%$. Bias = $\E[\hat\theta - \theta_0]$; Standard deviation (SD) = $\sqrt{\E[(\hat\theta - \E[\hat\theta])^2]}$; Root mean square error (RMSE) = $\sqrt{\E[(\hat\theta - \theta_0)^2]}$.}
    \label{fig:boxplots_main}
\end{figure}

Supplementary Table \ref{tab:runtime} reports the computation time (in seconds) for the MR methods over 100 replications under the third simulation scenario, with $\theta_0 = 0.1$, $q = 60\%$ invalid IVs, and $p = 30$ and $p = 100$ SNPs. All simulations were performed on a MacBook Air equipped with an Apple M1 chip and 8 GB of memory. The number of bootstrap replications for MR-Weighted-Median and MR-Quantile was equal to 1000. The mean runtime of MR-Quantile was equal to 0.26 and 0.30 seconds for $p = 30$ and $p = 100$, respectively. Overall, most methods remained computationally efficient as $p$ increased from 30 to 100, except for MR-ContMix, MR-Mix, cML-MA-BIC-DP and MR-PRESSO, which exhibited substantially larger increases in computation time. Notably, MR-ContMix, cML-MA-BIC-DP and MR-PRESSO displayed high variability in runtime at $p=100$ (\ac{SD} of 13.97 seconds, 8.76 seconds and 13.28 seconds respectively), suggesting unstable convergence in some replications. 

\subsubsection{Simulations with weak invalid IVs}
Empirical type I error rates for the simulation design with weak pleiotropic effects are presented in Figure \ref{fig:typeIerror_weak}. As expected, across all simulations, the methods that rely on consistent selection of invalid IVs, such as cML-MA-BIC and MR-Lasso, exhibited inflated type I error rates because it is more challenging to distinguish weak pleiotropic IVs from valid IVs. This is reflected by the decreasing type I error of these methods when the variance of the pleiotropic effects increases, reflecting improved discrimination between valid and invalid instruments. In contrast, cML-MA-BIC-DP, which uses data perturbation to consistently select invalid IVs, was the most conservative method and controlled the type I error adequately. Across all simulations, MR-Quantile was the only other method with low type I error, ranging from 0.059 to 0.072. When the variance $h^2_u$ of the correlated pleiotropic effects was equal to 0, MR-Egger, MR-IVW and MR-RAPS also controlled the type I error satisfactorily, as uncorrelated pleiotropic effects were balanced and normally distributed. On the other hand, in the presence of correlated pleiotropic effects ($h^2_u=0.1$), the type I error rates for all methods, except for cML-MA-BIC-DP and MR-Quantile, were highly inflated, especially in the setting with $h^2_y = h^2_u = 0.1$.

\begin{figure}[tbp]
    \centering
    \includegraphics[width=\textwidth]{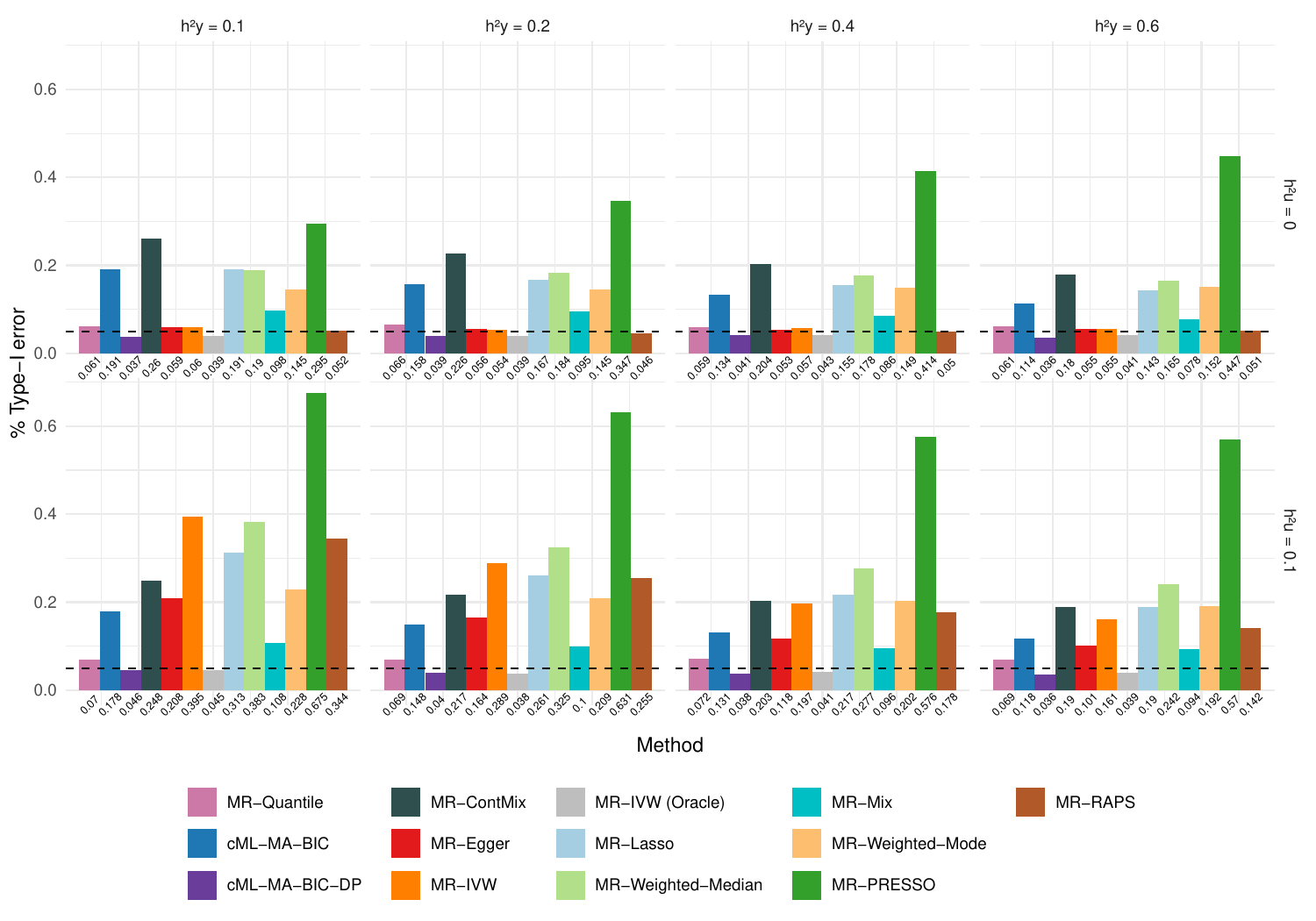}
    \caption{\textbf{Simulations with weak invalid IVs.} Empirical type I error rates at the nominal level of 0.05 with sample size $n=50,000$ and with $p=50$ SNPs. The fraction of invalid IVs is equal to $60\%$, and InSIDE assumption holds when $h^2_u = 0$ (top row), and is violated when $h^2_u = 0.1$ (bottom row). The strength of the uncorrelated and correlated pleiotropic effects increases respectively with $h^2_y$ (left to right) and $h^2_u$ (top to bottom).}
    \label{fig:typeIerror_weak}
\end{figure}

Empirical distributions of the MR methods estimates of the causal effect are presented in Figure \ref{fig:boxplots_weak} for various values of the true causal effect $\theta_0$ when $h^2_y=h^2_u=0.1$. Methods that rely on the \ac{InSIDE} and/or majority valid assumptions all exhibited substantial bias and variance. Moreover, some methods that rely on the plurality valid assumptions, such as MR-Lasso and MR-Weighted-Mode, were also biased in the presence of invalid IVs with weak pleiotropic effects. Of note, when the true causal effect $\theta_0$ was negative, cML-MA-BIC-DP had small to moderate bias towards 0, and lower power to detect a causal effect compared to other unbiased methods such as cML-MA-BIC, MR-ContMix, MR-Mix and MR-Quantile. Overall, MR-Quantile performed comparably well across all settings in term of bias, variance and power to detect a significant causal effect, as well as maintaining type I error rates close to the nominal level of 0.05 across all simulations. 

\begin{figure}[tbp]
    \centering
    \includegraphics[width=\textwidth]{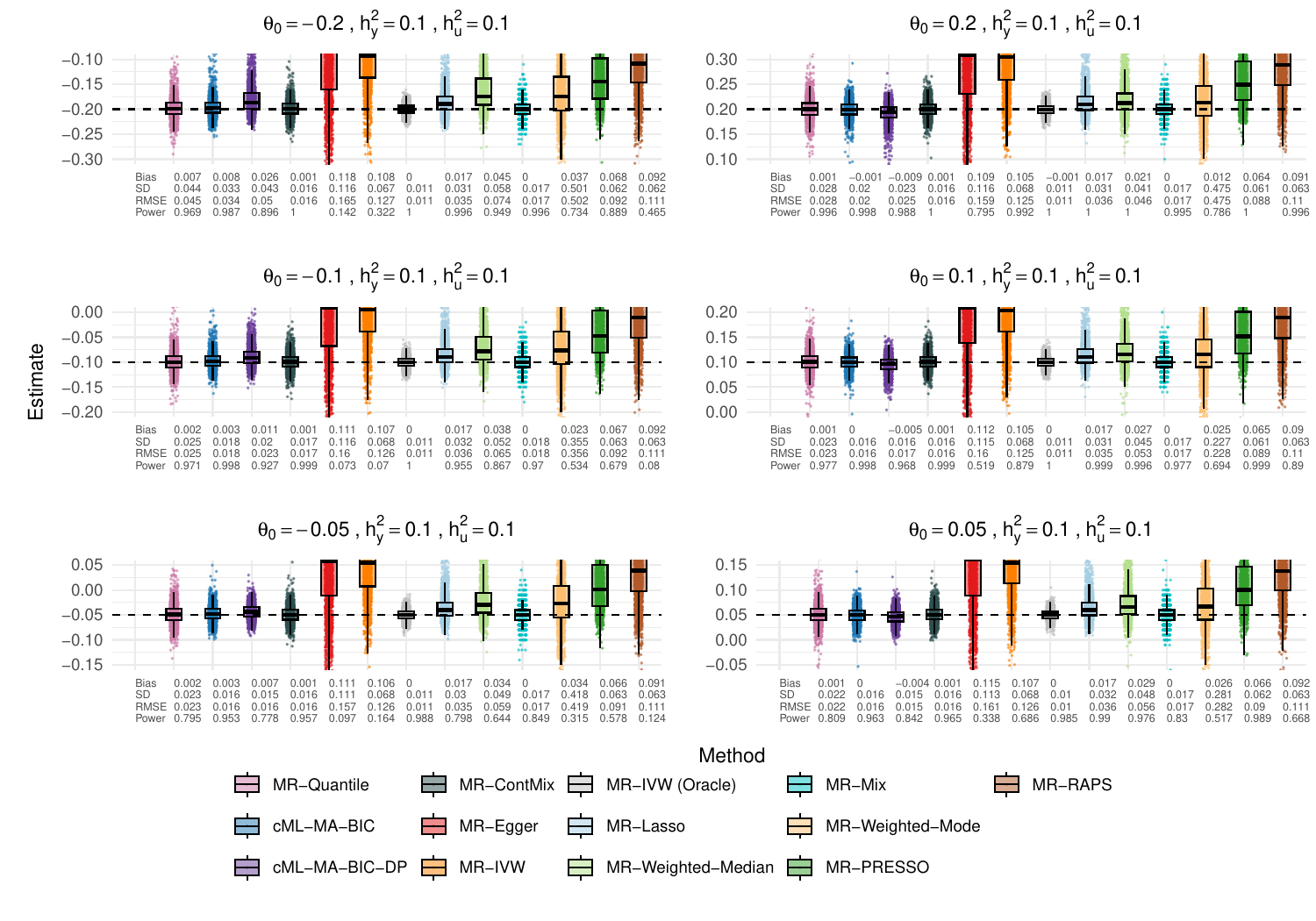}
    \caption{\textbf{Simulations with weak invalid IVs.} Empirical distributions of the estimates of the true causal effect $\theta_0$ with sample size $n=50,000$, number of SNPs $p=50$, and fraction of invalid IVs $q=60\%$. Bias = $\E[\hat\theta - \theta_0]$; Standard deviation (SD) = $\sqrt{\E[(\hat\theta - \E[\hat\theta])^2]}$; Root mean square error (RMSE) = $\sqrt{\E[(\hat\theta - \theta_0)^2]}$.}
    \label{fig:boxplots_weak}
\end{figure}

\section{Causal effect of resting heart rate on atrial fibrillation}

Atrial fibrillation (AF) is a common arrhythmia characterized by the disorganized contraction of the atria. Between 2010 and 2019, the global prevalence of \ac{AF} nearly doubled, increasing from 33.5 million to 59 million individuals~\citep{Linz2024}. When \ac{AF} is left untreated, it can lead to serious complications including cardioembolic stroke. Despite improved monitoring opportunities offered by the advent of consumer wearables and improvements in pharmacotherapy, \ac{AF} remains associated with a significant burden of morbidity and mortality~\citep{Joglar2024}. \ac{AF} pharmacotherapy has focused on two strategies: rate control, where the ventricular rate is controlled without attempting to restore sinus rhythm, or rhythm control, aimed at recovering and maintaining sinus rhythm. While there is now clinical evidence favoring rhythm control, investigating the causal effect of heart rate on \ac{AF} is an interesting question amenable to MR investigations~\citep{Prystowsky2022}. Moreover, observational studies have suggested a U-shaped relationship between \ac{RHR} and \ac{AF}, whereas previous MR analyses have reported an inverse causal effect~\citep{Klevjer2023,vandeVegte2023}. This paradoxical effect is discordant with the use of heart rate control for the treatment of \ac{AF}.

\begin{figure}[tbp]
\includegraphics[width=\textwidth]{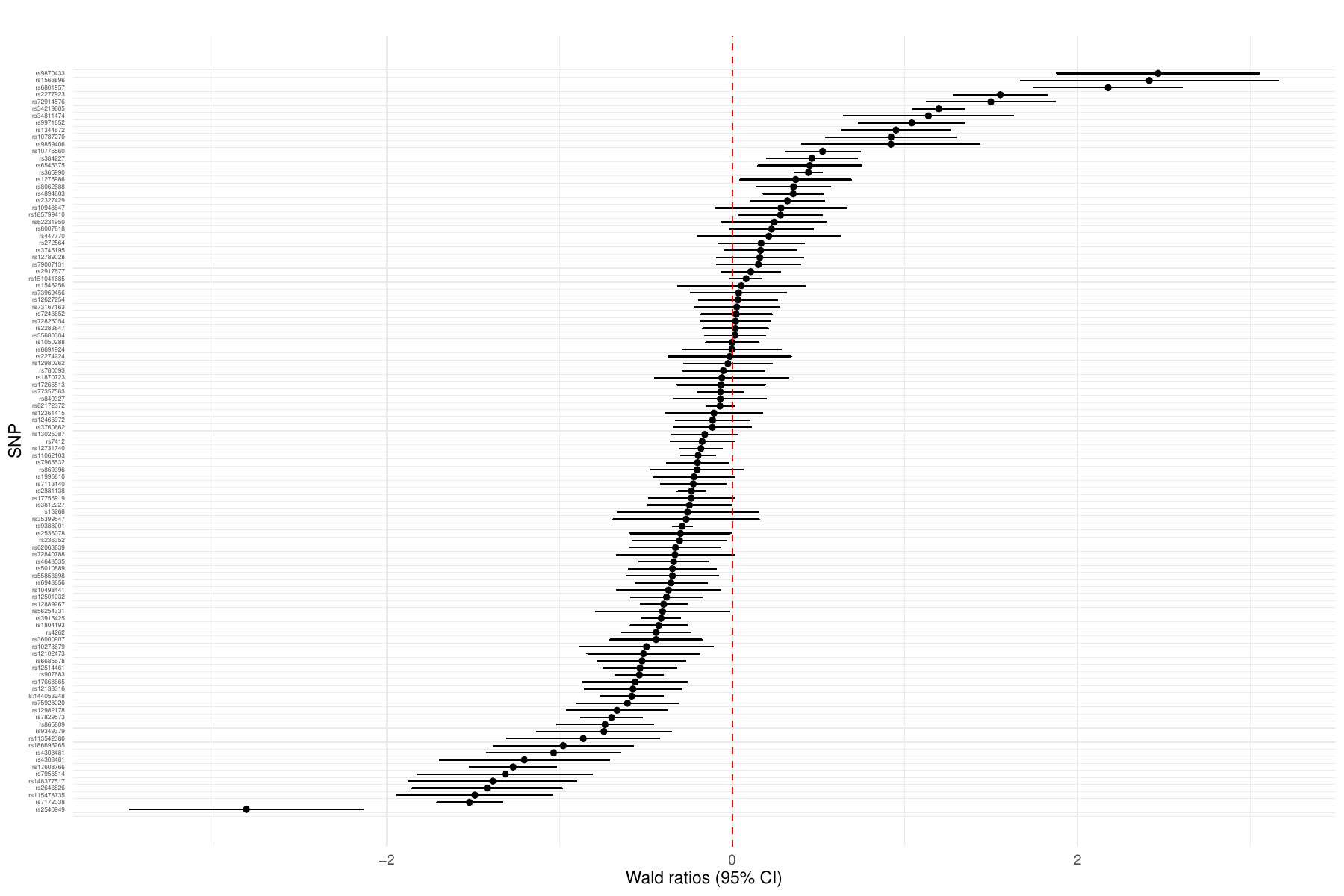}
\caption{\textbf{MR ratio estimates with 95\% CI for the causal effect between \ac{RHR} and AF for 104 independent SNPs associated with \ac{RHR}.}}\label{fig:case_study}
\end{figure}

In this case study, we sought to revisit previous MR findings using the most recent summary statistics for \ac{RHR} and \ac{AF}, and to evaluate them across a broad range of robust MR methods. Genetic associations with \ac{RHR}, after rank-based inverse-normal-transformation, were obtained from 425,748 European ancestry individuals from the VA Million Veteran Program, including approximately $\text{44 million}$ imputed SNPs~\citep{Verma2024}. A total of 425 independent SNPs were significantly associated with \ac{RHR} after correcting for multiple testing using a genome-wide significance threshold of $p < 5 \times 10^{-8}$. Genetic associations with \ac{AF}, expressed as log-\acp{OR}, were obtained from a European meta-analysis which included 228,926 \ac{AF} cases from eight studies and a total of $\text{36 M}$ imputed SNPs~\citep{Yuan2025}. After harmonizing variants between the two GWAS, only 104 of the 425 genome-wide significant SNPs were retained as instruments, as the remaining SNPs could not be found in the GWAS for \ac{AF} which used an older imputation panel with limited coverage. The ratio estimates derived from the 104 IVs were highly heterogeneous, suggesting a high proportion of pleiotropic variants (Figure~\ref{fig:case_study}). Given the low prevalence of \ac{AF} in the European population (1.63\%)~\citep{Linz2024}, the causal log-RR between \ac{RHR} and \ac{AF} can be approximated using ratio estimates based on the reported log-\acp{OR} from \citet{Yuan2025}.

We fitted MR-Quantile using ratio estimates obtained from the 104 independent IVs, and plotted the fitted $ALD(0, \hat\tau, 1)$ density of the standardized residuals $\hat e_i = w_i\hat\lambda(\hat r_i - \hat\theta)$, for $i=1,...,104$ (Figure \ref{fig:fitted_density}). The \acp{MLE} for the model parameters were equal to $\hat\theta=-0.23$, $\hat\tau = 0.42$, and $\hat\lambda=0.61$, corresponding to a RR (95\% CI) of 0.796 (0.728-0.870) (Figure~\ref{fig:forest}). In summary, one standard deviation increase in inverse normal transformed \ac{RHR} was associated with a 10-20\% lower risk of \ac{AF} across all methods summarized in Table \ref{tab:mr_methods}, suggesting a protective effect of increased \ac{RHR} on \ac{AF}, which is consistent with previous MR studies. A possible explanation for the paradoxical observation that increasing \ac{RHR} could protect against \ac{AF} is that some IVs may have context-specific pleiotropic effects. For example, some of the included IVs could have developmental effects on the heart, leading to changes in \ac{RHR}, but also other anatomical or physiological features of the heart. The observed heterogeneity in the MR ratios (Figure \ref{fig:case_study}) suggests that IVs may cluster into distinct modes, potentially reflecting subgroups with different underlying biological mechanisms or latent pleiotropic pathways. 

As an exploratory analysis aimed at identifying potential pleiotropic pathways biasing MR estimates, we performed a phenome-wide association study (PheWAS, Supplementary Table \ref{tab:opengwas}) using the OpenGWAS database \citep{Elsworth2020} and report in Supplementary Figure \ref{fig:heatmap} the most significant associations between the SNPs used as IVs and traits from different phenotype categories. We observed that SNPs with lower ratio estimates for the causal effect of \ac{RHR} on \ac{AF} tended to be positively associated with traits such as body mass index, impedance, weight, basal metabolic rate, forced expiratory volume, and forced vital capacity. Many of these traits are closely related to cardiovascular phenotypes. For example, higher body mass and basal metabolic rate are generally associated with higher \ac{RHR}, while pulmonary function traits such as improved forced expiratory volume and vital capacity are linked to greater cardiopulmonary efficiency and enhanced autonomic regulation, which together act to lower RHR and reduce arrhythmic risk. Clustering of invalid IVs into distinct groups may potentially violate the plurality valid assumption on which most MR methods rely on to consistently identify invalid IVs and/or identify the true causal effect. Future work exploring these potential clusters may help disentangle valid causal effects from pleiotropic effects and provide a clearer understanding of the pathways linking \ac{RHR} to \ac{AF}.

\begin{figure}
\centering
\includegraphics[scale=0.5]{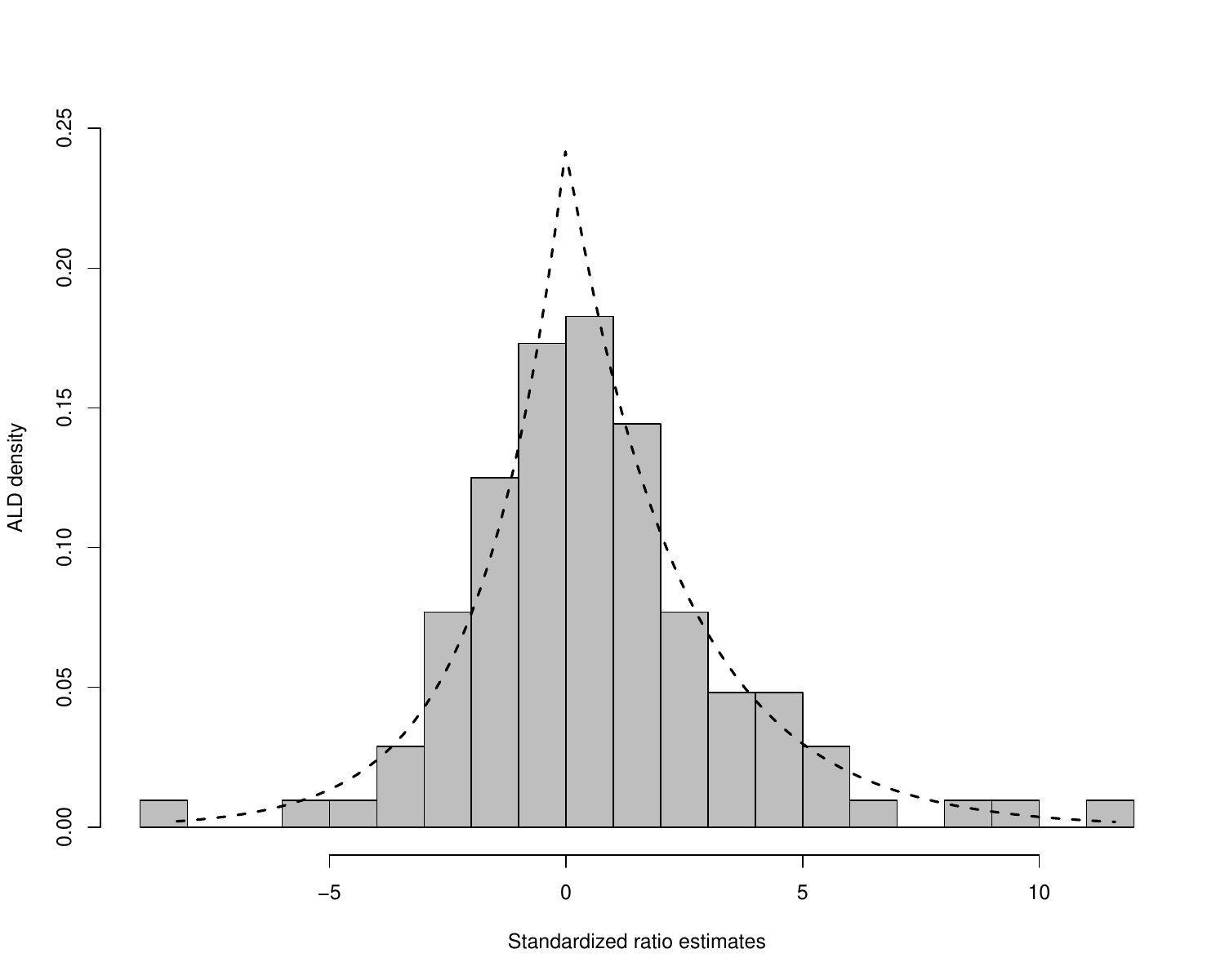}
\caption{\textbf{Fitted $\boldsymbol{ALD(0, \hat\tau, 1)}$ density of the standardized residuals $\boldsymbol{\hat e_i = w_i\hat\lambda(\hat r_i - \hat\theta)}$ for the causal effect between resting heart rate and atrial fibrillation for 104 independent SNPs.} The \acp{MLE} for the model parameters were equal to $\hat\theta=-0.23$, $\hat\tau = 0.42$, and $\hat\lambda=0.61$.}\label{fig:fitted_density}
\end{figure}

\begin{figure}
\centering
\includegraphics[
    scale=1,
    trim=4cm 2.5cm 4cm 3cm,
    clip
]{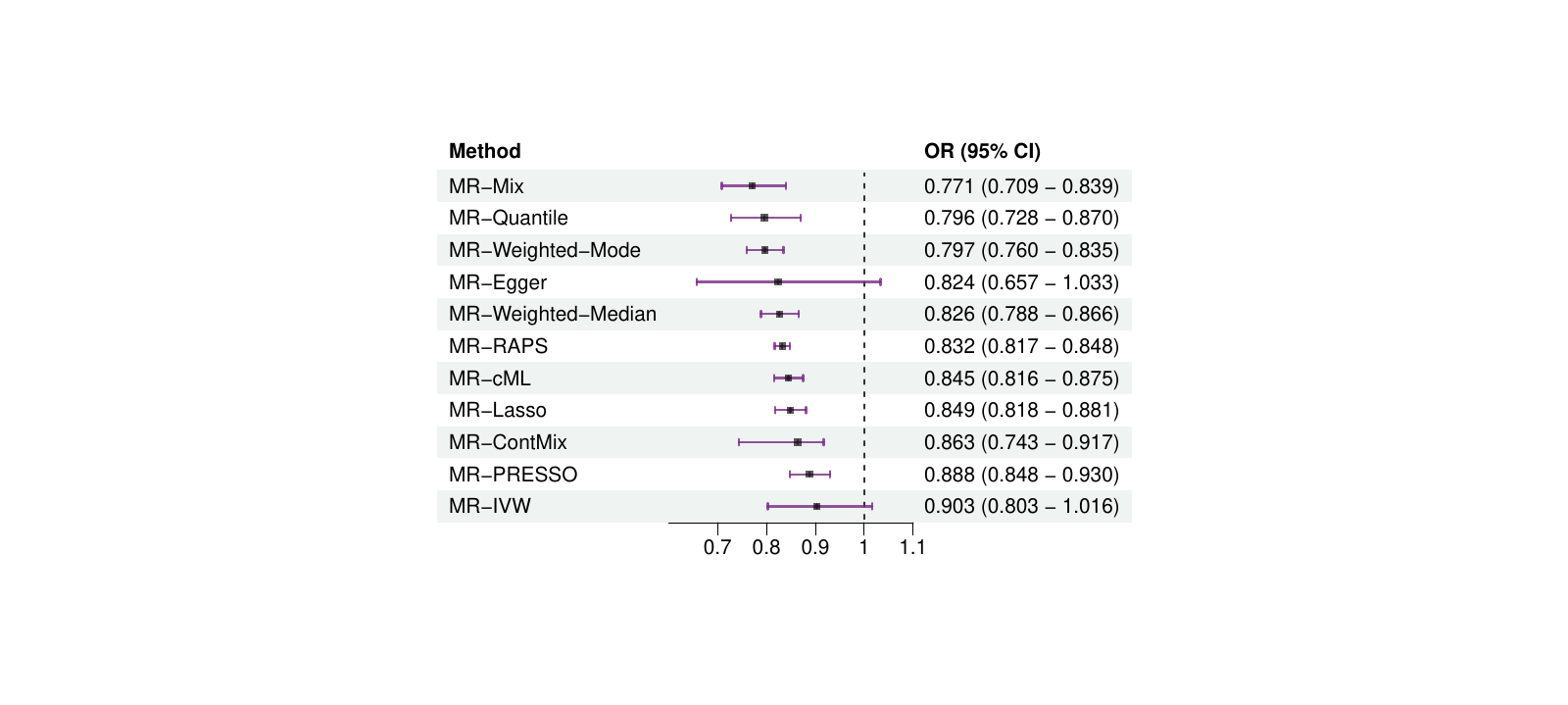}
\caption{\textbf{Estimated causal effect of resting heart rate on atrial fibrillation for compared MR methods, expressed as relative risk with 95\% confidence intervals.}}\label{fig:forest}
\end{figure}

\section{Discussion}

We have proposed a new method based on \ac{WQR} called MR-Quantile to perform consistent estimation and inference of the causal effect between an exposure and outcome of interest in the presence of both uncorrelated and correlated pleiotropic effects. We assumed an \ac{ALD} for the ratio estimates obtained from summary statistics for the SNP-exposure and SNP-outcome relationships for both valid and invalids IVs, and showed that the resulting \ac{MLE} for the location parameter was equivalent to a weighted quantile estimator, with weights inversely proportional to the standard error of the ratio estimates. In contrast to the weighted median estimator~\citep{Bowden2016} which is not consistent when more than 50\% of the instruments weights come from invalid IVs, we have shown that our estimator is consistent for the causal effect if the estimated quantile of the ratio estimates lies within the mode corresponding to the valid instruments, similarly to other mode-based estimators. We leveraged the likelihood-based formulation of the proposed model to perform data-driven estimation of the optimal quantile level $\tau$ that does not rely on any assumption on the true proportion of invalid variants in the observed sample.

We performed extensive simulations with both strong and weak pleiotropic effects, and showed that our proposed method performed comparably well in controlling the type I error rate and \ac{RMSE} in various settings, especially when the proportion of invalid IVs was high. In simulations with strong correlated pleiotropic effects, our proposed method performed competitively and was on par with other methods relying on identification of invalid IVs (cML-MA-BIC, cML-MA-BIC-DP, MR-Lasso) or on mixture models (MR-Mix). In simulations with weak correlated pleiotropic effects, cML-MA-BIC-DP was the only method that consistently controlled the type I error rate at the nominal level across all settings, due to the use of both data perturbation and model averaging, which came at the expense of high computational cost. While all other methods exhibited substantial type I error inflation, MR-Quantile produced type I error rates close to the nominal level. In addition, we showed that MR-Quantile had lower bias and higher power to detect a significant causal effect than cML-MA-BIC-DP across simulations when both correlated and uncorrelated pleiotropic effects were weak. Moreover, MR-Quantile was considerably more computationally efficient than cML-MA-BIC-DP, nearly 120 times faster when the number of SNPs was equal to 100, making it particularly well-suited for MR studies with a large number of IVs. Although in MR studies the number of instruments rarely exceeds 100, scalability becomes critical in settings where MR estimation is done repeatedly. For example, transcriptome-wide association studies (TWAS) are increasingly used to test the association between gene expression and phenotypes using MR models~\citep{Zhu2016}. In TWAS analyses, it is commonplace to include multiple tissues and genes, routinely leading to the estimation of $>$100,000 causal effects via MR. In such contexts, even moderate differences in runtime can translate into substantial cumulative computational costs.

We replicated recent MR studies on the causal effect of \ac{RHR} on \ac{AF} using summary statistics from two large GWAS, and found a protective effect of \ac{RHR} on the risk of \ac{AF}. As reported by previous studies, it is unlikely that increasing \ac{RHR} above the normal adult range (60-100 beats per minute (bpm)) would confer a protective effect against \ac{AF}.~\citet{Siland2022} applied nonlinear MR to estimate the association between \ac{RHR} and the logarithm of the incident \ac{AF} hazard rate by stratifying participants according to instrumental variable–free \ac{RHR} $(<65, 65-75, >75 \text{bpm})$. They found an inverse causal association among individuals with RHR below 65 bpm, with no evidence of association at higher RHR levels. While these findings may appear more biologically plausible than assuming a linear inverse causal effect across the full range of \ac{RHR}, increasing concerns have been raised on the use of nonlinear MR methods and the interpretation from their application~\citep{Wade2023,Burgess2023,Smith2023}. Moreover, both nonlinear and linear MR models assume that genetic effects remain constant across levels of the exposure and covariates. Alternatively, nonparametric IV estimators that allow flexible functional forms between the IV and exposure and between the exposure and outcome should be explored, although they require the use of individual-level data~\citep{Legault2025,Hartford2017,He2023}.

There are some limitations to our proposed method. First, since the density of the \ac{ALD} is non-differentiable at the mode, standard maximum likelihood theory is not directly applicable to derive the asymptotic variance of the \acp{MLE}. If the skewness parameter $\tau$ (quantile level) was assumed to be known, then the \acp{MLE} for the location parameter $\theta$ and inverse scale parameter $\lambda$ can be shown to be asymptotically normal since both estimators are expressed as linear combinations of order statistics~\citep{Kotz2002}.~\citet{Yu2005} derived the asymptotic normality of the \acp{MLE} when all parameters are unknown, albeit only in the homoskedastic setting. To maintain robustness in finite samples, we adopted the parametric bootstrap strategy proposed by~\citet{Bowden2016} for the weighted median estimator. This may partially explain why in simulations with strong pleiotropic effects, MR-Weighted-Median and MR-Quantile have less power to detect a significant causal effect compared to methods that rely on asymptotic normality. 

Second, our proposed MR estimator is currently limited to inference on the effect of a single exposure on the outcome. Deriving a multivariable extension would be useful for incorporating multiple exposures simultaneously in the model. However, this would require modeling SNP-outcome associations as responses instead of ratio estimates, and the \ac{ALD} may be less well suited for modeling the errors from SNP–outcome summary statistics than for capturing the error structure of ratio estimates. Indeed, the distribution of ratio estimates naturally align with the asymmetric and heavy-tailed behavior of the \ac{ALD}, whereas SNP–outcome associations are asymptotically normally distributed. Finally, we have assumed throughout this work that summary statistics for the SNP-exposure and SNP-outcome associations are independent. It would be of interest for future work to extend the proposed model to GWAS studies with overlapping samples, and also to adapt the methodology to studies with multiple correlated SNPs. 


\subsection*{Author Contributions}
Julien St-Pierre (J.S.) designed the study.~J.S. implemented the method, performed the simulations, analyzed the data, and interpreted the results.~J.S. prepared the initial manuscript draft.~Marc-André Legault (M.L.) and Mireille Schnitzer (M.S.) co-directed the project.~Archer Y. Yang contributed to improving the numerical stability and implementation of the algorithm. All authors reviewed and edited the manuscript.

\subsection*{Acknowledgments}
This study was supported in part by a Discovery Grant from the Natural Sciences and Engineering Research Council of Canada (Grant No. RGPIN-2021-03019 to M.S.), by the Canada Research Chairs program (Grant No. CRC-2024-00009 to M.S.), and by funding from IVADO and the Canada First Research Excellence Fund to J.S and M.L.


\subsection*{Conflicts of Interest}
The authors declare no conflicts of interest.

\subsection*{Data Availability Statement}
The data and code can be found on \textbf{GitHub}: 
\href{https://github.com/julstpierre/mr-quantile}{\textcolor{blue}{https://github.com/julstpierre/mr-quantile}}.

\clearpage
\appendix
\renewcommand{\theequation}{A.\arabic{equation}}
\setcounter{equation}{0}

\subsection*{A.1 Consistent estimation of the causal effect}

Under the standard assumptions, for $i=1,...,p$,
\begin{align*}
\textrm{plim }(\hat \beta_{Yi}) &= \theta_0\beta_{Xi} + \alpha_i + \beta_{YU}\phi_i, \\
\textrm{plim }(\hat \beta_{Xi}) &= \beta_{Xi},
\end{align*}
where $\theta_0$ denotes the causal effect, interpreted either as the average causal effect or the causal log relative risk. Hence,
$$\textrm{plim }(\hat r_i) = \frac{\theta_0\beta_{Xi} + \alpha_i + \beta_{YU}\phi_i}{\beta_{Xi}} = \theta_0 + \frac{\alpha_i + \beta_{YU}\phi_i}{\beta_{Xi}}.$$

Let $\hat \theta := \hat Q_{\tau}(\hat{\textbf{r}})$ be the weighted $\tau$-quantile of the \ac{MR} ratios $\hat r_1, ...,\hat r_p$ with strictly positive weights $w_1,...,w_p$. Using a continuity theorem, it follows that
\begin{align*}
\textrm{plim }(\hat \theta) &= \textrm{plim }(\hat Q_{\tau}(\hat{\textbf{r}})) \\
&= \hat Q_{\tau}(\textrm{plim }(\hat{\textbf{r}})) \\
&= \hat Q_{\tau}(\theta_0 + {\bm{\delta}}) \\
&= \theta_0 + \hat Q_{\tau}({\bm{\delta}}),
\end{align*}
where ${\bm \delta}$ is a vector of length $p$ with elements $\delta_i=\frac{\alpha_i + \beta_{YU}\phi_i}{\beta_{Xi}}$ for $i=1,...,p$, and $\delta_i=0$ for valid \acp{IV} only. Thus, $\hat\theta$ is a consistent estimator of $\theta_0$ if $ \hat Q_{\tau}({\bm{\delta}})=0$. Given strictly positive weights $w_i$ for $i=1,...,p$, the weighted $\tau$-quantile of $\bm \delta$ is the smallest $\delta_{(j)}$ such that $$\sum_{i:\delta_i \le \delta_{(j)}}w_i \ge \tau \sum_{i=1}^p w_i.$$ Hence, $\hat Q_{\tau}({\bm{\delta}})=0$ if $\tau$ satisfies
\begin{align*}
\frac{\sum_{i:\delta_i < 0}w_i }{\sum_{i=1}^p w_i}\le \ &\tau \le \frac{\sum_{i:\delta_i \le 0}w_i }{\sum_{i=1}^p w_i} \\
\Longleftrightarrow q_{-} \le \ &\tau \le 1 -q_{+},
\end{align*}
where $q_{-}$ and $q_{+}$ are respectively the weighted proportion of invalid \acp{IV} with negative and positive pleiotropic effects $\delta_i$.     

\subsection*{A.2 MLE for $\tau$}
The partial derivative of $\ell(\theta, \tau, \lambda)$ with respect to $\tau$ is given by
\begin{align}\label{eq:tau1}
\frac{\partial \ell(\theta, \tau, \lambda)}{\partial \tau} = p \frac{1 - 2\tau}{\tau(1-\tau)} - \lambda\sum_{i=1}^p w_i(\hat r_i - \theta)
\end{align}
and setting it equal to zero and multiplying both sides by $\tau(1-\tau)$ leads to
\begin{align}\label{eq:root}
a\tau^2  - \tau (2 p + a) + p = 0,
\end{align}
with $a = \lambda\sum_{i=1}^p w_i(\hat r_i - \theta)$. The roots of the quadratic equation are given by:
\begin{align*}
\hat \tau_{\pm} = \frac{a + 2p \pm \sqrt{a^2 + 4p^2}}{2a}.
\end{align*}

\begin{proposition}
Let $a \ne 0$, $p>0$. The negative root $\hat \tau_{-}$ is the only valid solution with $0 < \hat \tau_{-} < 1$.
\end{proposition}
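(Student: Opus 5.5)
The plan is to work directly with the quadratic $g(\tau) = a\tau^2 - \tau(2p+a) + p$ from \eqref{eq:root}, using sign evaluations at the endpoints rather than manipulating the explicit root formulas. First, $g(0) = p > 0$ and $g(1) = a - 2p - a + p = -p < 0$. Since $g$ is continuous, it has an odd number of roots in $(0,1)$. Because $g$ is a genuine quadratic when $a \ne 0$, it therefore has exactly one root in $(0,1)$.

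Next I will locate the other root outside $[0,1]$. The product of the roots is $p/a$ and the sum is $(2p+a)/a$.

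If $a>0$, the parabola opens upward with $g(1)<0$. The second root must then exceed $1$, since $g\to+\infty$ as $\tau\to\infty$.

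If $a<0$, the product $p/a$ is negative, so the roots have opposite signs. The second root is then negative.

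In either case exactly one root lies in $(0,1)$. It remains to identify it as $\hat\tau_{-}$.

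For $a>0$, note that $\sqrt{a^2+4p^2}>0$ gives $\hat\tau_+ > \hat\tau_-$. The larger root is therefore the one exceeding $1$, so the root in $(0,1)$ is $\hat\tau_-$.

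For $a<0$, dividing by $2a<0$ reverses the order, so $\hat\tau_+ < \hat\tau_-$. The smaller root is $\hat\tau_+$, which is therefore the negative one, and again the root in $(0,1)$ is $\hat\tau_-$.

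As a sanity check, I will verify that $\hat\tau_-$ matches \eqref{eq:tau} by rationalizing the numerator. Multiplying $a+2p-\sqrt{a^2+4p^2}$ by its conjugate gives $(a+2p)^2-(a^2+4p^2) = 4ap$. Hence
\[
\hat\tau_- = \frac{2p}{a+2p+\sqrt{a^2+4p^2}}.
\]
This form also shows directly that $\hat\tau_- \in (0,1)$. The denominator is positive because $\sqrt{a^2+4p^2} > |a|$, so $\hat\tau_- > 0$. Moreover $\hat\tau_- < 1$ is equivalent to $2p < a+2p+\sqrt{a^2+4p^2}$, which holds because $\sqrt{a^2+4p^2} > |a| \ge -a$.

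This gives a cleaner, case-free argument that I may present as the main proof. It also removes the apparent singularity at $a=0$, where the formula yields $1/2$. Some algebra will still be needed to reconcile it with the exact form $0.5 - a/(2(2p+\sqrt{a^2+4p^2}))$.

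The main obstacle is only bookkeeping: keeping track of the direction of the inequality when dividing by $2a$ with $a<0$. The endpoint-sign argument handles this cleanly.
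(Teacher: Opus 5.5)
Your proof is correct, but it takes a different route from the paper.

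The paper works root by root. It rewrites $0<\hat\tau_{\pm}<1$ as $-1<(2p\pm\sqrt{a^2+4p^2})/a<1$, splits on the sign of $a$, and multiplies through by $a$. It then uses $\sqrt{a^2+4p^2}>|a|$ to show that these inequalities always fail for $\hat\tau_{+}$ and always hold for $\hat\tau_{-}$.

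You instead obtain uniqueness from the sign change $g(0)=p>0>-p=g(1)$. This forces exactly one root of the quadratic into $(0,1)$, so all that remains is to identify which root it is. You do this in one of two ways:
\begin{itemize}
\item by ordering the roots, which still splits on the sign of $a$, with $g\to+\infty$ or the Vieta product $p/a<0$ placing the other root outside $[0,1]$;
\item without cases, through the rationalized form $\hat\tau_{-}=2p/(a+2p+\sqrt{a^2+4p^2})$.
\end{itemize}

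What your route buys:
\begin{itemize}
\item It explains why there is always exactly one admissible root.
\item The case-free version removes the bookkeeping about inequality directions.
\item It yields a formula that is visibly continuous at $a=0$ and equal to the paper's stable expression $0.5-a/(2(2p+\sqrt{a^2+4p^2}))$. The paper derives that expression separately, after the proposition.
\end{itemize}
The paper's route is more mechanical, but it checks each root explicitly and needs no counting argument.

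One caution if you present the case-free version as the main proof. The rationalized form by itself only shows $\hat\tau_{-}\in(0,1)$. The word ``only'' in the statement still rests on your endpoint sign-change count, or on excluding $\hat\tau_{+}$ directly, so that step must be kept.
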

\begin{proof}
The condition for $\hat \tau_{\pm}$ being a valid solution is given by $$-1 < \frac{2p \pm \sqrt{a^2 + 4p^2}}{a} < 1.$$
If $a>0$, we can rewrite the previous inequality as
\begin{align*}
-a-2p < \pm\sqrt{a^2 + 4p^2} < a-2p.
\end{align*}
Since $\sqrt{a^2 + 4p^2} > a$, the condition $\sqrt{a^2 + 4p^2} < a - 2p$ can never be satisfied, hence $\hat \tau_{+}$ is not a valid solution. The negative root $\hat \tau_{-}$ is a valid solution if and only if
\begin{align*}
a + 2p > \sqrt{a^2 + 4p^2} > 2p - a
\end{align*}
which is always true for $a > 0$.

\noindent If $a<0$, $\hat \tau_{\pm}$ is a valid solution if and only if
\begin{align*}
-a-2p > \pm\sqrt{a^2 + 4p^2} > a-2p,
\end{align*}
but since $\sqrt{a^2 + 4p^2} > |a| \ge -a$, we have that $-a -2p > \sqrt{a^2 + 4p^2}$ is never satisfied, and $\hat\tau_{+}$ is not a valid solution. The negative root $\hat \tau_{-}$ is a valid solution if and only if
\begin{align*}
a + 2p < \sqrt{a^2 + 4p^2} < 2p - a
\end{align*}
which is always true when $a<0.$
\end{proof}

To avoid numerical instability and division-by-zero errors when $a=0$, we multiply the second term of the valid root $\hat \tau_{-}$ by the conjugate $(2p + \sqrt{a^2 + 4p^2})$:
\begin{align*}
\hat \tau &= \frac{1}{2} + \frac{2p - \sqrt{a^2 + 4p^2}}{2a} \\
&= \frac{1}{2} + \frac{(2p - \sqrt{a^2 + 4p^2})(2p + \sqrt{a^2 + 4p^2})}{2a(2p + \sqrt{a^2 + 4p^2})} \\
&= \frac{1}{2} + \frac{4p^2 - (a^2 + 4p^2)}{2a(2p + \sqrt{a^2 + 4p^2})} \\
&= 0.5 - \frac{a}{2(2p + \sqrt{a^2 + 4p^2})}.
\end{align*}
This algebraically equivalent and continuous formula natively evaluates to exactly $0.5$ when $a = 0$, resolving the singularity.

\begin{proposition}
Let $\bar r_w = \sum_{i=1}^p w_i \hat r_i /\sum_{i=1}^p w_i$ be the weighted sample mean of the estimated ratios $\hat r_1,...,\hat r_p$ with $w_1, ...,w_p > 0$. Then, 
\begin{align*}
\begin{cases}
\hat \tau < 0.5 &\text{ if }\bar r_w > \theta, \\
\hat \tau = 0.5 &\text{ if }\bar r_w = \theta, \\
\hat \tau > 0.5 &\text{ if }\bar r_w < \theta.
\end{cases}
\end{align*}
\end{proposition}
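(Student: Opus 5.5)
The plan is to use the closed-form expression for the valid root derived just above,
\[
\hat\tau = 0.5 - \frac{a}{2\bigl(2p + \sqrt{a^2+4p^2}\bigr)}, \qquad a = \lambda\sum_{i=1}^p w_i(\hat r_i - \theta),
\]
together with the fact that the denominator is strictly positive. Since $p>0$, we have $2p + \sqrt{a^2+4p^2} \ge 4p > 0$ for every real $a$. Hence $\hat\tau - 0.5$ has the opposite sign of $a$, and $\hat\tau = 0.5$ exactly when $a=0$. The formula is continuous at $a=0$, so this case needs no separate argument.

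The second step relates the sign of $a$ to the comparison between $\bar r_w$ and $\theta$. Because $\sum_i w_i \theta = \theta \sum_i w_i$, we can write
\[
a = \lambda \sum_{i=1}^p w_i \hat r_i - \lambda\theta\sum_{i=1}^p w_i = \lambda\Bigl(\sum_{i=1}^p w_i\Bigr)(\bar r_w - \theta).
\]
Here $\lambda>0$ is the inverse scale parameter, and $\sum_i w_i>0$ because all weights are strictly positive. So $\operatorname{sign}(a) = \operatorname{sign}(\bar r_w - \theta)$.

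Combining the two steps gives the three cases directly. If $\bar r_w>\theta$, then $a>0$ and $\hat\tau<0.5$. If $\bar r_w=\theta$, then $a=0$ and $\hat\tau=0.5$. If $\bar r_w<\theta$, then $a<0$ and $\hat\tau>0.5$.

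No step here is a real obstacle. The only point needing care is that the statement should be read with the same $\theta$ and $\lambda$ that enter $a$, as in one update of Algorithm~\ref{alg:iterative}, and with $\lambda>0$. Also, the rationalized form must be used rather than $\hat\tau_{-}$ in its original form, because the original form is undefined at $a=0$; for $a\neq0$ the earlier proposition guarantees the two forms agree.
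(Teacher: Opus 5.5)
Your proposal is correct and follows essentially the same route as the paper. Both rewrite $a = \lambda\bigl(\sum_{i} w_i\bigr)(\bar r_w - \theta)$ and read off the sign of $\hat\tau - 0.5$ from the rationalized formula, whose denominator is strictly positive; your explicit note that $\lambda>0$ and $\sum_i w_i>0$ are needed for $\operatorname{sign}(a)=\operatorname{sign}(\bar r_w-\theta)$ states what the paper leaves implicit.
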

\begin{proof}
The \ac{MLE} for $\tau$ is given by the characteristic equation \eqref{eq:root} with $a = \lambda\sum_{i=1}^p w_i(\hat r_i - \theta)$.  First, we can rewrite $a = \lambda\sum_{i=1}^p w_i (\bar r_w - \theta)$ such that $a>0$ if and only if $\bar r_w > \theta$, $a=0$ if and only if $\bar r_w = \theta$, and $a<0$ if and only if $\bar r_w < \theta$. Using the numerically stable formula for $\hat\tau$:
$$ \hat \tau = 0.5 - \frac{a}{2(2p + \sqrt{a^2 + 4p^2})} $$
Since the denominator $2(2p + \sqrt{a^2 + 4p^2})$ is strictly positive, the sign of $(\hat \tau - 0.5)$ is entirely determined by the sign of $-a$. Thus, $\hat \tau = 0.5$ when $a=0$ (i.e., $\bar r_w = \theta$), and $\textrm{sign}(\hat \tau - 0.5) = -\textrm{sign}(a) = -\textrm{sign}(\bar r_w - \theta)$.
\end{proof}

\renewcommand\refname{References}
\bibliographystyle{apalike}
\bibliography{biblio}

\clearpage
\section*{Supplementary Materials}

\subsection*{Supplementary Figures}

\setcounter{figure}{0}
\renewcommand{\thefigure}{S\arabic{figure}}

\begin{figure}[htbp]
    \centering
    \includegraphics[width=\textwidth]{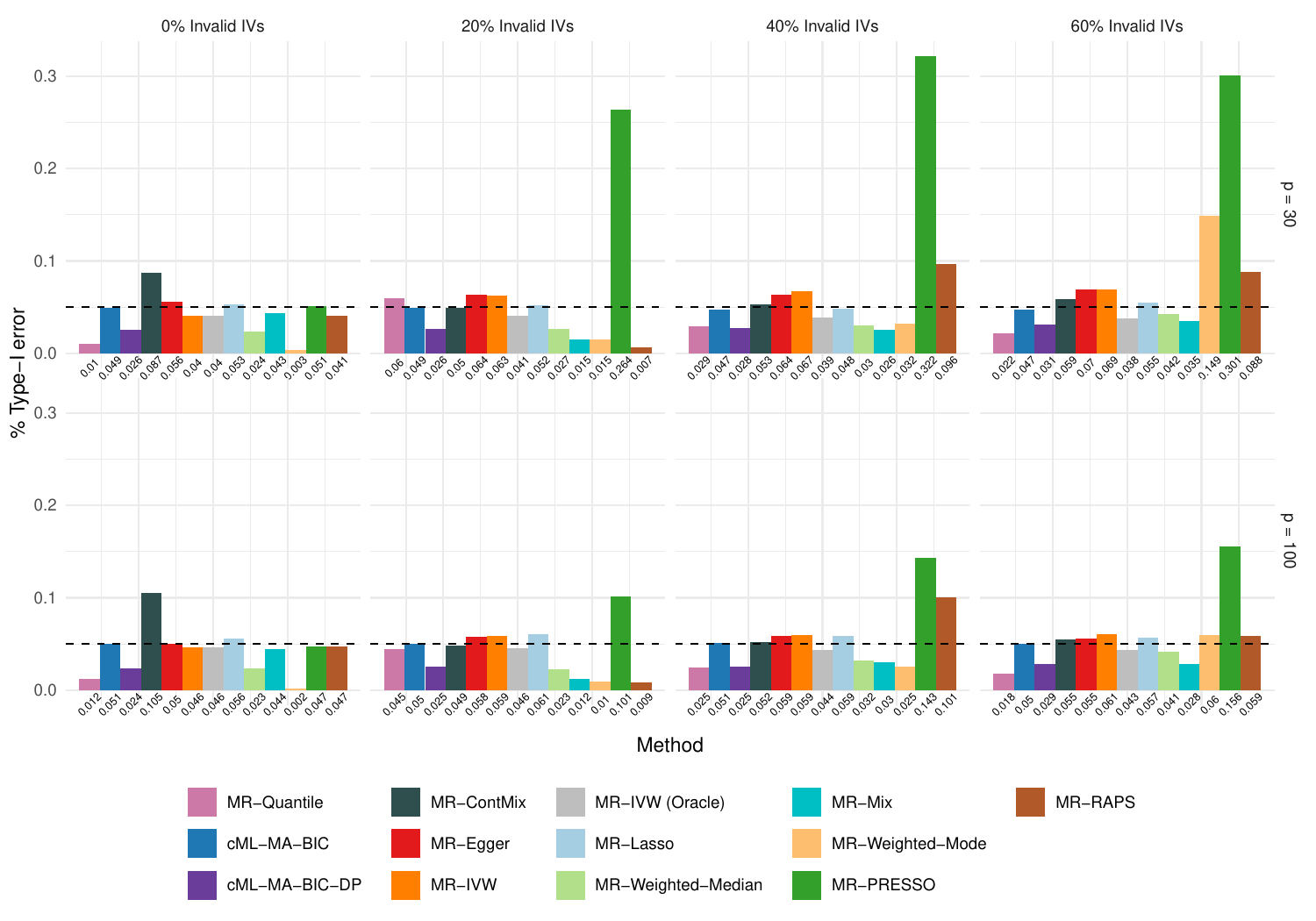}
    \caption{\textbf{Simulations with strong uncorrelated pleiotropic effects (InSIDE assumption holds).} Empirical type I error rates at the nominal level of 0.05 with sample size $n=50,000$ and with $p=30$ or $p=100$ SNPs. The fraction of invalid IVs varies from $0\%$ to $60\%$.}
    \label{fig:typeIerror_InSIDE}
\end{figure}

\begin{figure}[tbp]
    \centering
    \includegraphics[width=\textwidth]{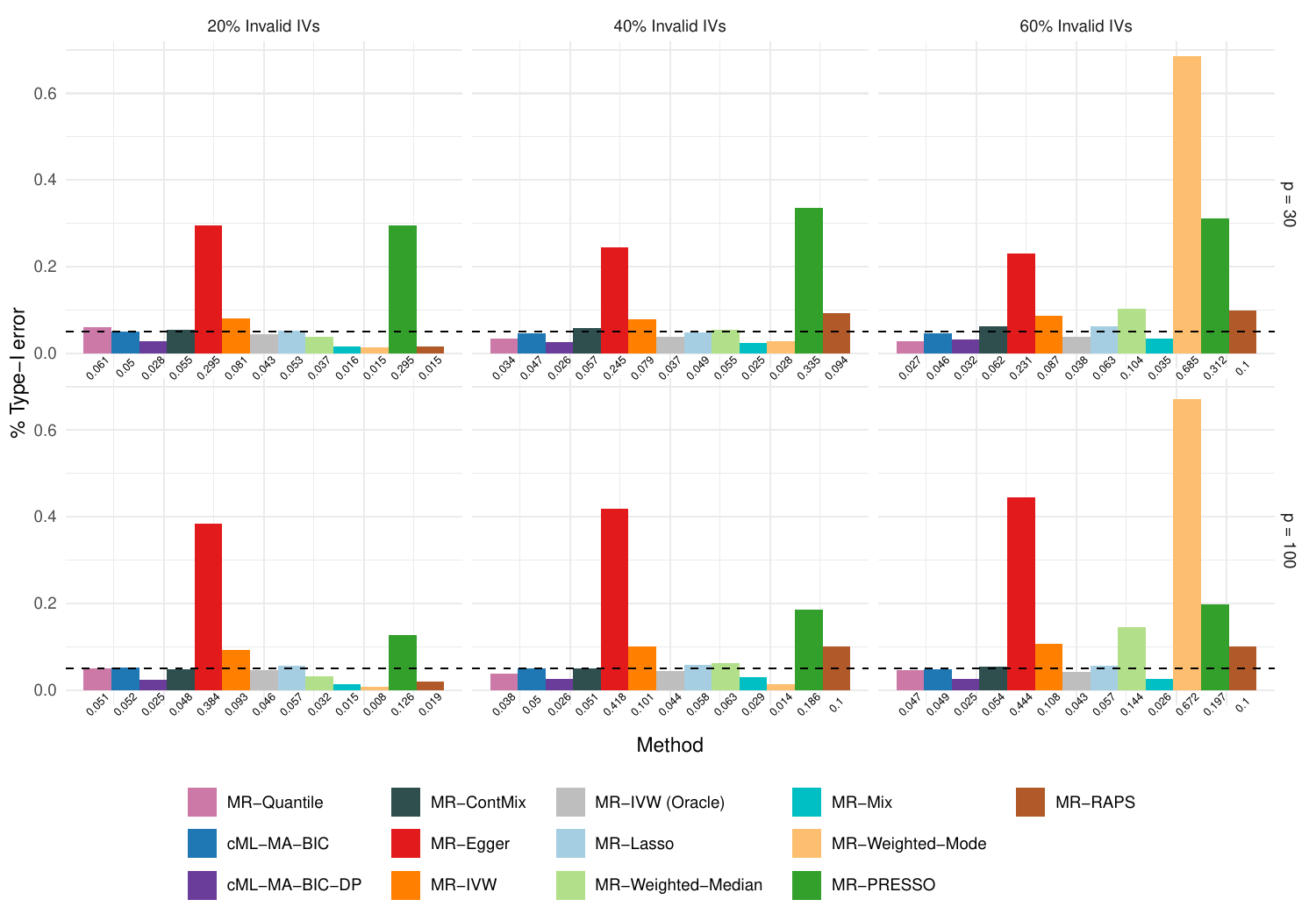}
    \caption{\textbf{Simulations with strong correlated pleiotropic effects (InSIDE assumption violated).} Empirical type I error rates at the nominal level of 0.05 with sample size $n=50,000$ and with $p=30$ or $p=100$ SNPs. The fraction of invalid IVs varies from $20\%$ to $60\%$.}
    \label{fig:typeIerror_noInSIDE}
\end{figure}

\begin{figure}
\centering
\includegraphics[scale=0.7]{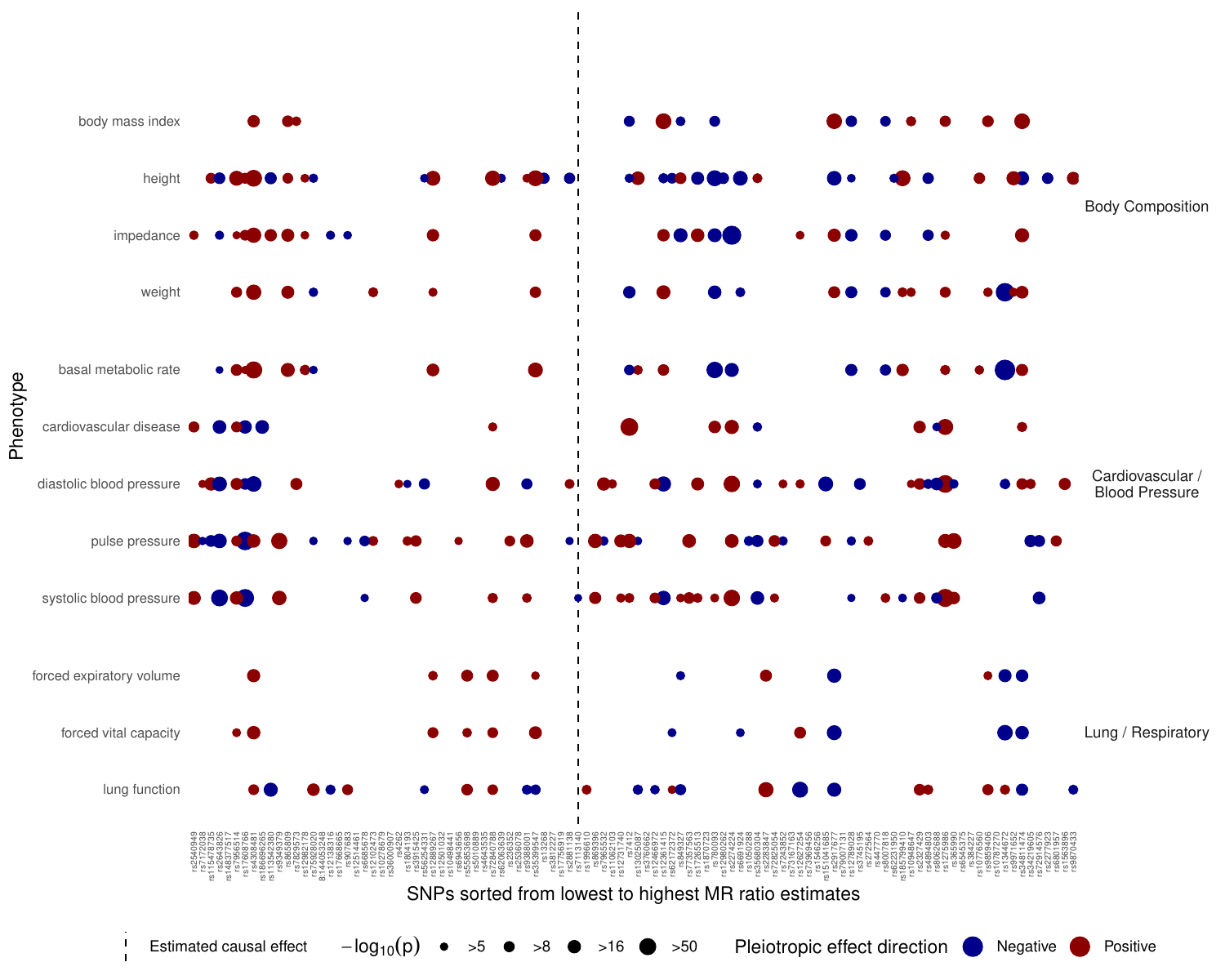}
\caption{\textbf{Exploratory analysis to identify potential pleiotropic pathways biasing MR estimates of the causal effect between \ac{RHR} and \ac{AF}.} Phenome-wide association study (PheWAS) using the OpenGWAS database~\citep{Elsworth2020}, with most significant associations ($p < 10^{-5}$) between SNPs and traits reported for various phenotypic categories. The SNPs on the x-axis are sorted from lowest to highest MR ratio estimates, and the dashed line represents the relative position of the estimated causal effect between \ac{RHR} and \ac{AF} from MR-Quantile.}\label{fig:heatmap}
\end{figure}

\clearpage
\subsection*{Supplementary Tables}

\setcounter{table}{0}
\renewcommand{\thetable}{S\arabic{table}}

\begin{table}[h]
\centering
\caption{\textbf{Computation time (seconds) for compared MR methods:} Values are reported as mean (standard deviation) over 100 datasets for the third simulation scenario with $\theta_0=0.1$ and $q=60\%$ invalid IVs.}
\label{tab:runtime}
\begin{tabular}{lll}
\\
\toprule
 & \multicolumn{2}{c}{Number of instruments} \\
\cmidrule(lr){2-3} Method
       & \multicolumn{1}{l}{$p = 30$} & \multicolumn{1}{l}{$p = 100$} \\
\midrule
MR-Egger               & 0.003 (0.001) & 0.003 (0.001) \\
MR-IVW                 & 0.002 (0.004) & 0.003 (0.003) \\
MR-Lasso               & 0.02 (0.003)  & 0.02 (0.001)  \\
MR-Weighted-Median     & 0.03 (0.004)  & 0.03 (0.001)  \\
MR-RAPS                & 0.10 (0.08)   & 0.11 (0.02)   \\
cML-MA-BIC             & 0.04 (0.005)  & 0.18 (0.02)   \\
MR-Quantile            & 0.26 (0.03)   & 0.30 (0.03)   \\
MR-Weighted-Mode       & 0.94 (0.06)   & 0.98 (0.05)   \\
MR-ContMix             & 0.03 (0.12)   & 1.52 (13.97)  \\
MR-Mix                 & 2.05 (0.06)   & 3.28 (0.19)   \\
cML-MA-BIC-DP          & 8.17 (0.95)   & 41.05 (8.76)  \\
MR-PRESSO              & 15.26 (0.42)  & 68.90 (13.28) \\
\bottomrule
\end{tabular}
\end{table}

\begin{table}[ht]
\small
\centering
\caption{Publicly available datasets from the OpenGWAS database used in the case study}
\label{tab:opengwas}
\begin{tabular}{llllllc}
  \hline
Category & Trait & OpenGWAS ID & Year & Author & PMID & \makecell{Number of significantly \\ associated IVs} \\ 
  \hline \\
  Body Composition & body mass index & ebi-a-GCST90013974 & 2021 & Mbatchou J & 34017140 &   1 \\ 
   & body mass index & ebi-a-GCST90018947 & 2021 & Sakaue S & 34594039 &   1 \\ 
   & body mass index & ebi-a-GCST90029007 & 2018 & Loh PR & 29892013 &   6 \\ 
   & body mass index & ieu-b-40 & 2018 & Yengo, L & 30124842 &   6 \\ 
   & height & ebi-a-GCST90018959 & 2021 & Sakaue S & 34594039 &   3 \\ 
   & height & ebi-a-GCST90029008 & 2018 & Loh PR & 29892013 &  31 \\ 
   & height & ukb-b-16881 & 2018 & Ben Elsworth &  &   4 \\ 
   & impedance & ukb-b-19379 & 2018 & Ben Elsworth &  &   4 \\ 
   & impedance & ukb-b-19921 & 2018 & Ben Elsworth &  &   2 \\ 
   & impedance & ukb-b-7376 & 2018 & Ben Elsworth &  &  17 \\ 
   & impedance & ukb-b-7859 & 2018 & Ben Elsworth &  &   1 \\ 
   & weight & ebi-a-GCST90018949 & 2021 & Sakaue S & 34594039 &   5 \\ 
   & weight & ukb-b-11842 & 2018 & Ben Elsworth &  &  16 \\ 
  \\
  Cardiovascular /  & basal metabolic rate & ebi-a-GCST90029025 & 2018 & Loh PR & 29892013 &  21 \\ 
   \ Blood Pressure & cardiovascular disease & ebi-a-GCST90029019 & 2018 & Loh PR & 29892013 &  14 \\ 
   & diastolic blood pressure & ebi-a-GCST90000059 & 2020 & Surendran P & 33230300 &   2 \\ 
   & diastolic blood pressure & ebi-a-GCST90000063 & 2020 & Surendran P & 33230300 &   2 \\ 
   & diastolic blood pressure & ebi-a-GCST90014017 & 2021 & Mbatchou J & 34017140 &   3 \\ 
   & diastolic blood pressure & ebi-a-GCST90018952 & 2021 & Sakaue S & 34594039 &   1 \\ 
   & diastolic blood pressure & ebi-a-GCST90025981 & 2021 & Barton AR & 34226706 &   1 \\ 
   & diastolic blood pressure & ebi-a-GCST90029010 & 2018 & Loh PR & 29892013 &   3 \\ 
   & diastolic blood pressure & ieu-b-39 & 2018 & Evangelou, E & 30224653 &  21 \\ 
   & diastolic blood pressure & ukb-b-7992 & 2018 & Ben Elsworth &  &   1 \\ 
   & pulse pressure & ebi-a-GCST90000061 & 2020 & Surendran P & 33230300 &   4 \\ 
   & pulse pressure & ebi-a-GCST90000065 & 2020 & Surendran P & 33230300 &   2 \\ 
   & pulse pressure & ebi-a-GCST90018970 & 2021 & Sakaue S & 34594039 &  27 \\ 
   & pulse pressure & ieu-b-5140 & 2024 & Tang H &  &   4 \\ 
   & systolic blood pressure & ebi-a-GCST90000062 & 2020 & Surendran P & 33230300 &   4 \\ 
   & systolic blood pressure & ebi-a-GCST90000066 & 2020 & Surendran P & 33230300 &   3 \\ 
   & systolic blood pressure & ebi-a-GCST90014018 & 2021 & Mbatchou J & 34017140 &   1 \\ 
   & systolic blood pressure & ebi-a-GCST90025968 & 2021 & Barton AR & 34226706 &   1 \\ 
   & systolic blood pressure & ebi-a-GCST90029011 & 2018 & Loh PR & 29892013 &   1 \\ 
   & systolic blood pressure & ieu-b-38 & 2018 & Evangelou, E & 30224653 &  19 \\ 
   & systolic blood pressure & ukb-b-20175 & 2018 & Ben Elsworth &  &   1 \\ 
  \\
  Lung /  & forced expiratory volume & ukb-b-13405 & 2018 & Ben Elsworth &  &   1 \\ 
   \ Respiratory & forced expiratory volume & ukb-b-19657 & 2018 & Ben Elsworth &  &  10 \\ 
   & forced vital capacity & ieu-b-106 & 2021 & Higbee D &  &   1 \\ 
   & forced vital capacity & ukb-b-14713 & 2018 & Ben Elsworth &  &   2 \\ 
   & forced vital capacity & ukb-b-7953 & 2018 & Ben Elsworth &  &   9 \\ 
   & lung function & ebi-a-GCST90029026 & 2018 & Loh PR & 29892013 &  18 \\ 
   & lung function & ebi-a-GCST90029027 & 2018 & Loh PR & 29892013 &   6 \\ \\
   \hline
\end{tabular}
\end{table}

\end{document}